\documentclass[]{article}
\usepackage[T1]{fontenc}
\usepackage{etoolbox}
\usepackage{cite}
\usepackage{amsmath,amssymb,amsfonts, amsthm}
\usepackage{algorithm, algpseudocodex}
\usepackage{graphicx}
\usepackage{textcomp}
\usepackage{xcolor}
\usepackage{xspace}
\usepackage[colorinlistoftodos,prependcaption,textsize=footnotesize]{todonotes}
\usepackage{hyperref}
\usepackage{numprint} 
\usepackage{subcaption}
\newcommand{\DS}[1]{\todo[author=Dominik,inline,color=green!50]{ #1}\xspace}

\newcommand{\budget}{\mathcal{G}\xspace}
\newcommand{\proc}{{\mu}\xspace}
\newcommand{\PPP}{\mathcal{P}\xspace}

\newcommand{\idleP}[1]{\mathcal{P}_{\text{idle}}^{#1}}
\newcommand{\workP}[1]{\mathcal{P}_{\text{work}}^{#1}}

\newcommand{\algo}[1]{\texttt{#1}\xspace}

\newcommand{\start}[1]{\sigma(#1)}
\newcommand{\finish}[1]{\sigma(#1) + t(#1)}

\newcommand{\CWM}{\algo{CWM}}
\newcommand{\cost}{\mathcal{C}\mathcal{C}}
\newcommand{\slack}{\texttt{H-CWS-s}}
\newcommand{\pressure}{\texttt{H-CWS-p}}
\newcommand{\heft}{\texttt{HEFT-SL}}
\newcommand{\new}[1]{\textcolor{black}{#1}}
\newcommand{\old}[1]{\textcolor{red}{#1}}

\newtheorem{theorem}{Theorem}[section]
\newtheorem{lemma}[theorem]{Lemma}

\theoremstyle{definition}

\theoremstyle{remark}

\numberwithin{equation}{section}

\providecommand{\keywords}[1]{\vspace{0.5em}\noindent\textbf{Keywords:} #1}
\newtoggle{TR}
\toggletrue{TR} 
\newtoggle{showOld} 
\togglefalse{showOld}
\newtoggle{showBudget} 
\togglefalse{showBudget}

\begin{document}
%
%
%
\title{Carbon-Aware Mapping and Scheduling for Deadline-Constrained Workflows}
%
\title{Carbon-Aware Mapping and Scheduling for Deadline-Constrained Workflows}
\date{}
\author{
  Dominik Schweisgut\thanks{Karlsruhe Institute of Technology (KIT), Germany, \texttt{dominik.schweisgut@kit.edu}}
  \and
  Anne Benoit\thanks{ENS Lyon, France, \texttt{Anne.Benoit@ens-lyon.fr}}
  \and
  Yves Robert\thanks{ENS Lyon, France, \texttt{Yves.Robert@ens-lyon.fr}}
  \and
  Henning Meyerhenke\thanks{Karlsruhe Institute of Technology (KIT), Germany, \texttt{meyerhenke@kit.edu}}
}
\maketitle
\begin{abstract}
As datacenters continue to grow in scale, their energy consumption and resulting carbon footprint have become pressing concerns.
\iftoggle{showOld}{\old{, where the execution of each task depends on the completion of its predecessors
With the increasing share of renewable energy, more and more datacenters operate on a mixed energy supply,
where part of the energy comes from ``green'' sources.
To reduce carbon emissions resulting from executing these workflows,
it is promising to shift task execution to periods where the availability of green power is high.}}{}
\new{With the increasing share of renewable energy in a datacenter's mixed energy supply, shifting task execution to periods of high green-power availability is a promising strategy to reduce carbon emissions.}
\new{However, in heterogeneous computing environments, 
 the power consumption of compute nodes in a datacenter can also vary.}
In practice, workloads submitted to datacenters are often not isolated tasks, but entire workflows
consisting of interdependent tasks with precedence constraints.
A further challenge arises from the fact that carbon emission reductions must typically be achieved under strict workflow deadlines. 
In this work, we show that the problem posed by these challenges for the scheduler is NP-hard and admits no constant-factor approximation 
even for the uni-processor case. 
Motivated by this hardness, we present a novel algorithm \CWM that combines carbon-aware mapping and scheduling to construct feasible solutions. 
Our approach integrates dynamic programming with efficient heuristics to exploit renewable energy availability and infrastructure heterogeneity.
To assess the quality of the new algorithm, we evaluate it against the state-of-the-art approach $\texttt{CaWoSched}$
and show that \CWM achieves significant reductions in terms of carbon emissions in experiments. 
In particular, we are able to achieve a median carbon cost reduction of $42\%$ over the best version of
$\texttt{CaWoSched}$ when the deadline is two times the makespan of a carbon-agnostic baseline.
\iftoggle{showOld}{\old{Note that \texttt{CaWoSched} in turn decreases the carbon cost of the carbon-agnostic baseline by~$36\%$,
$32\%$ or $36\%$, respectively,
exploiting the slack of the larger deadline.}}{}
\new{Note that \texttt{CaWoSched} itself already reduces the carbon-agnostic baseline by~$36\%$.}
\end{abstract}
\keywords{Workflow scheduling algorithms \and Carbon-aware computing \and DAG scheduling \and Heterogeneous platforms \and Mapping} 
\section{Introduction}
\label{sec.introduction}
\iftoggle{showBudget}{\DS{this should be roughly 1.5 pages}}{}
The growing awareness of environmental concerns and increasingly stringent regulations are pushing all manufacturing entities towards reducing the carbon footprint of their operations~\cite{mencaroni25}. 
Data/computing centers are also particularly affected as they represent a steadily increasing fraction of carbon emissions~\cite{AhmedBA21review}. 
In this context, data/computing centers have started to use a mix of different power sources, 
giving priority to lower carbon-emitting technologies (solar, wind, nuclear) over higher ones (coal, natural gas).
\iftoggle{showOld}{\old{
This explains that carbon-aware scheduling
has become a major technique for lowering the carbon footprint of computations and data accesses by accounting for the time-dependent
carbon intensity of the power grid and the varying availability of renewable electricity sources.}}{}
\new{This has motivated growing interest in carbon-aware scheduling as a promising technique for lowering the carbon footprint of computations and data accesses, by accounting for the time-dependent
carbon intensity of the power grid and the varying availability of renewable electricity sources.} 
While carbon-aware scheduling introduces new challenges, it also brings new opportunities for HPC scientists.
This is exemplified by this work focusing on scheduling application workflows on platforms whose power mix differs over time. 
On the one hand, this is a challenging endeavor because
designing efficient scheduling algorithms was already difficult (NP-complete  in most scenarios) 
when computers were powered by a single energy source with a constant level of carbon emissions.
On the other hand, the availability of greener time intervals is a great opportunity to revisit well-known algorithms: intuitively, one would aim at shifting tasks away from high-carbon emitting intervals,
while maintaining good overall performance.
More specifically, this work deals with scientific applications that consist of workflow graphs, where nodes represent tasks and edges represent inter-task dependencies.
Many workloads in data/computing centers can be expressed as such workflows, and the scheduling literature is rich of algorithms and complexity results.
The typical objective used to be the minimization of the total execution time (also called \emph{makespan}).
\iftoggle{showOld}{\old{In recent years, this single objective has been complemented by efforts to minimize the total energy consumption (typically using DVFS~\cite{rizvandi2012multiple}),
or to maximize the reliability (typically using replication~\cite{haque2017reliability} or checkpoints~\cite{han2018checkpointing}).
As a consequence, latest research focuses on multi-objective approaches, aiming at minimizing one metric while guaranteeing some threshold on the other metrics.
For instance one would minimize energy consumption while enforcing a makespan deadline and a reliability threshold.}}{}
\new{In recent years, objectives beyond makespan (total energy consumption~\cite{rizvandi2012multiple}, reliability~\cite{haque2017reliability}) have also been studied in multi-objective trade-offs, 
aiming at minimizing one metric while guaranteeing some threshold on other metrics, e.g., minimizing energy consumption while enforcing a makespan threshold.} 
But all these approaches rely on a stable platform whose characteristics do not vary over time.
A major novelty with regards to carbon emissions is that the platform is changing over time, 
alternating greener and browner intervals for carbon emissions, 
and thereby calling for new algorithms.
\new{A partial solution to the problem appeared in our previous work~\cite{CaWoSched}, where we focus on carbon minimization at the price of a 
restrictive assumption: that the mapping and ordering of the workflow  tasks are given (say using the well-known HEFT algorithm~\cite{HEFT}), together with a time deadline to complete the execution.}
The rule of the game is then to shift tasks across carbon intervals on each processor so as to minimize carbon emissions while maintaining the deadline. This restricted focus turns out to be sufficient to gain significant reduction of carbon emissions for a comprehensive set of experiments~\cite{CaWoSched}.
\iftoggle{showOld}{\old{
This work builds on the results of~\cite{CaWoSched}, but tackles a much more ambitious approach: rather than using a standard mapping and ordering algorithm (like HEFT), we aim at designing carbon-aware mapping and scheduling algorithms, and to assess the gains that can be expected from this holistic approach.
}}{}
\new{This work builds on the results of~\cite{CaWoSched}, but addresses a combinatorially much more difficult problem:
rather than fixing the mapping and ordering upfront (e.g., via HEFT), 
we jointly optimize mapping and scheduling, facing a much larger search space, and assess the gains that can be expected from this holistic approach.}
We stress that we target heterogeneous clusters whose processors have different speeds and different energy profiles.

\smallskip
\noindent{\bf Contributions. }
The main contributions of this paper are threefold:
(i) on the theory side, the proof that no approximation up to a constant factor exists for the simple problem instance with a unique processor (unless P=NP);
(ii) the design of sophisticated carbon-aware mapping and scheduling algorithms \new{with more flexibility and higher quality than state-of-the-art solutions such as~\cite{CaWoSched}};
and  (iii) their empirical assessment through a comprehensive campaign of simulations.
\iftoggle{showOld}{\old{The algorithms work in several phases: deadline-agnostic carbon minimization, deadline repair (whenever the original deadline is not met by the first phase) and local optimization.}}{}
Not surprisingly, the experiments demonstrate a major gain over a carbon-unaware approach.
More importantly, they further improve significantly
two state-of-the-art algorithms from~\cite{CaWoSched} that carbon-optimize a mapping produced by a HEFT variant.
%

\iftoggle{TR}{
\smallskip
\noindent{\bf Outline. }
The rest of the paper is organized as follows. Section~\ref{sec.related-work} surveys related work. In Section~\ref{sec.framework},
we detail our model and discuss complexity results.
We introduce new carbon-aware mapping and scheduling algorithms in Section~\ref{sec.algorithms} and assess their performance through an extensive set of simulations in Section~\ref{sec.experiments}.
Finally, we give concluding remarks and hints for future work in Section~\ref{sec.conclusion}.}{}
%
\section{Related Work}
\label{sec.related-work}
\iftoggle{showBudget}{\DS{Should be 1-1.25 pages}}{}
Carbon-aware computing has gained notable attention since datacenters represent a significant share of global carbon emissions and data volumes continue to rise.
\new{The current rise of AI and the associated scaling of datacenters adds massively to this effect.
Companies such as Meta are already exploring ways to run their hyperscaled datacenters carbon-neutrally~\cite{sustainableAI}.}
\iftoggle{showOld}{\old{Most likely, the carbon footprint peak of datacenters has not been achieved yet; thus, more work is needed to achieve carbon-neutral
datacenters.}}{}
Cao et al.~\cite{carbonUsageSurvey} first analyze the carbon footprint of datacenters 
and corresponding trends and then discuss ways to mitigate the trend of rising carbon footprints by datacenters.
In general, much prior work maintains a high-level workload view and does not address the scheduling of workflows (with notable exceptions described later).
For example, carbon-aware load balancing for geo-distributed datacenters shifts 
load to regions with lower carbon intensity, but considers only abstract workloads 
rather than interdependent tasks~\cite{Mahmud-Iyengar-2016}.
Likewise, cloud providers with globally distributed datacenters shift workloads in a scheduler-agnostic manner to greener sites based on renewable-energy forecasts~\cite{radovanovic2022carbon}.
However, as Wiesner et al.~\cite{wiesner2021} demonstrate, shifting tasks to periods with cleaner 
energy is also an efficient approach to save carbon emissions.
Breukelman et al.~\cite{breukelman2024carbon} 
follow a similar abstraction without interdependencies between the jobs 
and are thus incomparable to our work.
Mencaroni et al.~\cite{mencaroni25} present a carbon-aware scheduling approach for flow-shop scheduling.
Their model is formulated as a mixed-integer linear program. They solve the problem using 
a memetic algorithm with notable decrease in carbon emissions and limited effects on the makespan. 
However, due to the missing precedence constraints, the model is not comparable to our work. 
Another approach to a model without precedence constraints is the work of Wang et al.~\cite{wang25}.
They present a multi-objective scheduling approach where they use reinforcement learning to compute schedules 
with low carbon emissions and high Quality of Service. 
Moreover, Hanafy et al.~\cite{hanafy23} use the elasticity of cloud workloads by adjusting workloads
to the carbon intensity.

We move on by mentioning some general work regarding workflow scheduling. 
Since we present a plan-based (offline) approach for scheduling, 
we focus on these types of algorithms.
Note that there is also literature for online scheduling approaches. For this and more literature review, 
we refer the reader to surveys such as~\cite{liu2020survey}.
It is  well-known that plan-based scheduling is NP-hard in general~\cite{garey1979computers}.
This motivates the use of heuristics for larger problems.
There are mainly two different approaches to plan-based scheduling.
\iftoggle{showOld}{\old{First, there are partitioning-based approaches, which partition the
input workflow into blocks and assign whole blocks to processors. 
This is useful for large workflows since this reduces work compared to assigning each task individually.
Often, these approaches are further refined with other techniques to improve solution quality~\cite{ozkaya2019acyclic, ozkaya2019scalable, kulagina2024mapping,viil2018framework}.}}{}
\new{Partitioning-based approaches 
partition the input workflow into blocks and assign whole blocks to processors, trading per-task flexibility for scalability~\iftoggle{TR}{\cite{ozkaya2019scalable, kulagina2024mapping,viil2018framework}}{\cite{viil2018framework}}.} 
More closely to our work are list-based approaches. One of the most
influential examples is HEFT~\cite{HEFT}. Since we also use a variation of HEFT in this work,
we briefly mention the main idea. In the first phase, HEFT assigns priorities to all tasks.
Then, in phase two, it finds suitable assignments for the tasks in that order.
The popularity of this approach can be seen in its various extensions such as~\cite{pheft,samadi2018eheft,Shi06}.
In terms of energy and carbon-awareness, there are also adaptations of HEFT to save energy~\cite{durillo2014, durillo2015}.
However, these works  do not consider a varying carbon intensity, and hence we cannot compare
to them.
There is a noteworthy approach by Lechowicz et al.~\cite{bashir2025} that handles the time-varying 
availability of greener energy and the precedence constraints between tasks. 
In particular, they identify structurally important tasks to reduce effects on the makespan. 
However, there are several severe differences to our framework, among them the carbon cost model,
no explicit communication, and the absence of a cluster power model.

\new{The carbon-aware workflow scheduling framework $\texttt{CaWoSched}$~\cite{CaWoSched}
is the closest related work.  It} 
takes a given mapping as input, i.\,e., an assignment 
of tasks to processors, as well as the order of tasks and communications. 
The input also includes a power profile, which specifies a time horizon ending at the schedule deadline, 
partitioned into intervals, where each interval has a constant green power budget. 
Several variants of $\texttt{CaWoSched}$ 
optimize the schedule in terms of carbon cost by shifting tasks and communications to intervals with 
a high green power budget.
$\texttt{CaWoSched}$ also uses a local search approach to further improve the solution.
The main difference with this work is that the mapping and ordering are given in~\cite{CaWoSched}, 
while we aim at determining carbon-effective mappings here, \new{which requires substantially more involved algorithmic components.}
\section{Framework} 
\label{sec.framework}
\iftoggle{showBudget}{\DS{This should be roughly 1.75 pages}}{}
The goal is to map and schedule a workflow application, which is a set of interdependent tasks with precedence constraints,
onto a parallel platform with a budget of green power that varies with time, in order to minimize the carbon cost.
\iftoggle{TR}{{We first describe the application and platform, before defining mappings and schedules. Then, we state
the objective function and discuss problem complexity.}}{}

\smallskip
\noindent{\bf Application and platform.} 
The workflow application is modeled as a directed acyclic graph (DAG), where the vertices (set~$V$) represent the tasks,
and the edges (set~$E$) express the precedence constraints. The workflow is actually a weighted DAG  $G = (V, E, \omega, c)$,
where $\omega(v)$ is the amount of work required by task~$v\in V$, and $c(v_i, v_j)$ is the amount
of data to be communicated if tasks $v_i$ and $v_j$ are executed on two distinct processors (with $(v_i, v_j) \in E$). 
The goal is to map and schedule the application onto a set of $P$ heterogeneous processors interconnected
by homogeneous full-duplex communication links. For $1\leq k \leq P$, processor~$p_k$ has a speed~$s(p_k)$
and it can execute task~$v_i\in V$ within
time~$\frac{\omega(v_i)}{s(p_k)}$. 

\smallskip
\noindent{\bf Mappings and schedules. }
The mapping~$\mu$ specifies the processor $\proc(v_i)$ on which each task~$v_i\in V$ is executed.
Given the mapping, task~$v_i$ takes time $t(v_i)= \frac{\omega(v_i)}{s(\mu(v_i))}$ to execute. 
We further define a schedule~$\sigma$, which specifies the time at which each task is processed, and each communication is initiated. 
In order to neatly integrate communication costs, similarly to~\cite{CaWoSched}, 
we consider communications as additional tasks 
to be executed on communication channels. 
The task corresponding to edge  $(v_i, v_j)\in E$ is denoted by~$v_{i,j}$. 
There are $P(P-1)$ communication channels between processors, and we assume that
they can be used simultaneously. However, two communications occurring on the same channel must
be serialized. 
\iftoggle{showOld}{\old{The mapping of communication tasks onto communication channels is directly
derived from the mapping function~$\proc$.}}{}
For task~$v_{i,j}$, if $\proc(v_i)\neq \proc(v_j)$,
a communication must occur between both processors, and hence $v_{i,j}$ is mapped
onto the corresponding communication channel (between $\proc(v_i)$ and  $\proc(v_j)$), 
with a speed of~$\beta$ (the platform's bandwidth). 
The communication then takes time $t(v_{i,j}) = \frac{c(v_i, v_j)}{\beta}$
and we set $\beta=1$ for simplicity.  
Otherwise, $v_{i,j}$ is mapped onto the processor executing both $v_i$ and~$v_j$, and its
cost is then set to~$0$, since no communication is required. 

A schedule~$\sigma$ is valid if each
processor never executes more than one task at a time, and all precedence constraints are respected.
\iftoggle{showOld}{
\old{The constraints take the following form:
\begin{itemize}
\item For all $v\in V$, $\sigma(v)\geq 0$;
\item For all $v,v' \in V$ with $\proc(v)=\proc(v')$,  $\sigma(v) + t(v) \leq \sigma(v')$ or $\sigma(v') + t(v') \leq \sigma(v)$;
\item For all $(v_i, v_j)\in E$ with $\proc(v_i)=\proc(v_j)$, $\sigma(v_i) + t(v_i) \leq \sigma(v_j)$;
item For all $(v_i, v_j)\in E$ with $\proc(v_i) \neq \proc(v_j)$, $\sigma(v_i) + t(v_i) \leq \sigma(v_{i,j})$
and $\sigma(v_{i,j}) + t(v_{i,j}) \leq \sigma(v_j)$.
\end{itemize}}}{}
\new{Formally: 
(i) $\sigma(v)\geq 0$ for all $v\in V$; 
(ii) no overlaps: for all $v,v' \in V$ with $\proc(v)=\proc(v')$,  $\sigma(v) + t(v) \leq \sigma(v')$ or $\sigma(v') + t(v') \leq \sigma(v)$;
(iii) for $(v_i,v_j)\in E$ with $\proc(v_i)=\proc(v_j)$, $\sigma(v_i)+t(v_i)\leq\sigma(v_j)$;
and (iv) for $(v_i,v_j)\in E$ with $\proc(v_i)\neq\proc(v_j)$, $\sigma(v_i)+t(v_i)\leq\sigma(v_{i,j})$ and $\sigma(v_{i,j})+t(v_{i,j})\leq\sigma(v_j)$.}

\smallskip
\noindent{\bf Objective function. } 
The objective function is  to find a valid mapping and schedule in order to minimize the carbon cost,
while not exceeding a bound~$D$ on the makespan, which is the total execution time: 
$\max_{v\in V}\{\sigma(v)+ t(v)\}\leq D.$
As stated in the introduction, 
we consider that carbon intensity varies with time, since the amount of available green power depends
on many factors.  
We follow here the notation of our previous work and competitor \algo{CaWoSched}~\cite{CaWoSched}.
Hence, we
\iftoggle{showOld}{
\old{consider a time horizon $[0,T[$, divided into $J$ intervals $\{I_1, \dots, I_J\}$,
where interval $I_j$ has length $l_j$ and $\sum_{j=1}^{J} l_j = T$.
We let $I_j = [b_j, e_j[$ so that $l_j = e_j - b_j$ for every $1 \leq j \leq J$.
Within each interval~$I_j$, there is a constant green power budget~$\budget_j$ for each time unit $t\in I_j$.}}{}
\new{consider a time horizon $[0,T[$ divided into $J$ constant-budget intervals $I_j=[b_j,e_j[$, $1\leq j\leq J$, with $\sum_{j=1}^{J}(e_j-b_j)=T$ and green power budget~$\budget_j$ for each time unit $t\in I_j$.}
Note that $D\leq T$.

There are a total of $P + P(P-1) = P^2$ processors in the system, when accounting for communication channels,
and each of them consumes a different amount of static and dynamic power. Let $\PPP$ be the set of processors. 
Processor $p\in \PPP$ always consumes a static idle power~$\idleP{p}$,  
and it consumes an additional power~$\workP{p}$
when executing a task (dynamic power). 
If the power consumed by all processors 
at time~$t\in I_j$ is below~$\budget_j$, then the carbon cost is zero. 
Otherwise, the platform must use carbon-emitting power for the excess power, hence
incurring a carbon cost of $\PPP_t - \budget_j$, where $\PPP_t$ is the sum of all processor power
consumptions at time~$t$. The total carbon cost is obtained by integrating over time ($0\leq t < D$). 

\smallskip
\noindent{\bf Problem complexity. }
In~\cite{CaWoSched}, we proved that the decision problem with independent tasks (hence, no communication)
that are already mapped and ordered on a set of homogeneous processors 
(identical speeds, power consumption values $\idleP{p}=0$ and $\workP{p}=1$ for each $p\in\PPP$)
is strongly NP-complete. The problem obviously remains strongly NP-complete when we further have
to decide the mapping and scheduling. 
But the problem instance with a single processor -- while having polynomial complexity with a fixed mapping and ordering~\cite{CaWoSched} -- now becomes strongly NP-complete
due to the additional complexity of having to decide the ordering of the tasks.
Further, we find that there is no constant-factor approximation, unless P=NP
(even when the workflow DAG has no edges):
\begin{theorem}\label{thm.approx_DAG}
\new{
The decision problem of whether there is a mapping and scheduling of a set of independent tasks on a single processor
in order to minimize the carbon cost, while not exceeding a bound~$T$ on the makespan, with carbon cost not exceeding a given bound $C$,
is NP-complete in the strong sense. Furthermore, there is no polynomial-time $\lambda$-approximation algorithm 
(with $\lambda \geq 0$) such that the algorithm 
gives a solution with carbon cost $\mathcal{CC} \leq \lambda\cdot \mathcal{CC}^{*}$, 
where $\mathcal{CC}^{*}$ is the optimal carbon cost, unless $P=NP$.}
\end{theorem}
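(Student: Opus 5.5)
We reduce from \textsc{3-Partition}, which is strongly NP-complete. Membership in NP is immediate: a certificate is the order of the tasks on the single processor together with their start times. We may assume without loss of generality that the start times are drawn from the polynomially many breakpoints generated by interval boundaries and task lengths. Given $\sigma$, checking that the makespan is at most $T$ and evaluating the carbon cost, by integrating the piecewise constant excess power over the intervals, both take polynomial time.

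For hardness, take an instance of \textsc{3-Partition} with integers $a_1,\dots,a_{3m}$ and bound $B$, where $\sum_i a_i = mB$ and $B/4<a_i<B/2$. We build one processor with $\idleP{p}=0$ and $\workP{p}=1$, and $3m$ independent tasks with $\omega(v_i)=a_i$ and unit speed. The time horizon alternates $m$ green intervals of length $B$, each with budget $\budget=1$, and $m-1$ separating red intervals of length $1$, each with budget $0$. We set $T$ to be the end of the last green interval, so $T=mB+m-1$, and $C=0$.

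A schedule has carbon cost $0$ if and only if no task executes during any red interval. In that case each task lies entirely inside a single green interval. Since the total work is $mB$ and the total green length is also $mB$, every green interval is filled exactly. The bounds $B/4<a_i<B/2$ then force exactly three tasks per interval, which yields a 3-partition. The converse direction is immediate: given a 3-partition, schedule each triple back to back inside its own green interval. All numbers are polynomial in the unary size of the instance, so the problem is NP-complete in the strong sense.

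The inapproximability result comes from the same construction, because the instance is designed so that the optimum is exactly $0$ on yes-instances and at least $1$ on no-instances. The key step is the no-instance bound. On a no-instance, some task must overlap a red interval. The only alternative is for every task to be contained in a green interval, and we just showed that this forces a 3-partition. All start times can be taken to be integers: shifting tasks onto integral positions never increases the red overlap, because every boundary and every length is integral. An integral task overlapping a red interval therefore incurs a cost of at least $1$.

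Now suppose a polynomial-time $\lambda$-approximation existed. On a yes-instance it would have to return a solution with cost at most $\lambda\cdot 0=0$. On a no-instance every solution has cost at least $1$. Running the algorithm and testing whether its output has cost $0$ would then decide \textsc{3-Partition}, which implies $P=NP$. The argument works for every $\lambda\geq 0$, and indeed for any ratio function, since it rests only on the gap between $0$ and a positive value.

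The main subtlety is the no-instance argument under real-valued start times. If start times are allowed to be arbitrary reals, a task could overlap a red interval by a tiny fraction $\epsilon$, so the optimum on a no-instance need not be $\geq 1$. It is, however, strictly positive, and that is all the gap argument needs: with $\mathcal{CC}^{*}=0$ exactly on yes-instances, any bounded-ratio approximation must output $0$ precisely on those instances. So I would rely on the $0$ versus $>0$ dichotomy rather than on integrality. I would also double-check that fixing $\idleP{p}=0$ is allowed by the model; if it is not, I would instead raise each green budget by the idle power.
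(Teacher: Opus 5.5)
Your proof is correct and matches the paper's. It uses the same 3-partition reduction: green intervals of length $B$ with budget $1$, separated by unit-length intervals with budget $0$, and target cost $0$. It then uses the same gap argument between optimum $0$ and optimum strictly positive to rule out any $\lambda$-approximation. The integrality detour is not needed, and as you note, the bound ``at least $1$'' holds only for the optimum, not for every real-valued schedule. Relying on the $0$ versus ${>}0$ dichotomy, as you finally do, is exactly the paper's argument.
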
 
As a direct consequence, the general problem tackled in this work
does not have any constant-factor approximation.
\new{The proof of the theorem can be found in~\iftoggle{TR}{Appendix~\ref{appendix.thmProof}}{\cite{review}, Appendix~A}.}
\section{Algorithms}
\label{sec.algorithms}
\iftoggle{showBudget}{\DS{This should be roughly 4 pages}}{}
In this section, we present a novel algorithm, \CWM, for the problem described in Section~\ref{sec.framework}.
\CWM takes as input (i) a DAG $G=(V,E,\omega, c)$ representing the workflow \new{and} a user-defined deadline, 
(ii) cluster specifications such as $\idleP{}$, $\workP{}$ and
the speed $s(\cdot)$ of each processor in the cluster, 
and (iii) a power profile with a green power budget per interval. 
From that, \CWM computes in two phases a carbon-optimized schedule that meets the deadline.
\new{First, it computes a deadline-agnostic preliminary schedule by focusing on making mapping decisions in a carbon-aware way, see Section~\ref{sec.deadline-agnostic-mapping}.
In the second phase, \CWM adjusts the schedule so that it meets the deadline (if needed),
while still taking carbon costs into account, see Section~\ref{sec.deadline-repair}.}
At the end of each phase,
we use a local search, described in Section~\ref{sec.local-search}, to further improve the solution.
\subsection{Deadline-Agnostic Mapping}
\label{sec.deadline-agnostic-mapping}
The overall idea for computing a carbon-optimized, deadline-agnostic schedule
in \CWM is the following. 
For each given interval, we have a power budget. Since the availability of green power varies over time, 
it is likely that there is not
enough green power for the whole cluster to be active at zero cost (at least not all the time). 
Hence, we solve for every interval a dynamic program to 
find a suitable subset of processors that can be active at the same time without exceeding the power budget. 
Within this subset of processors, we then use a version of the list-based scheduling heuristic \algo{HEFT}~\cite{HEFT} to optimize for makespan. 
The adaptation, which we call \algo{HEFT-SL}, is necessary to 
account for the carbon cost of communications.
Overall, our approach combines at every stage carbon awareness and deadline awareness. 
Yet, in this phase, \CWM
prioritizes carbon awareness for its decisions. 
First, it chooses for every interval $I_j$ a subset~$\mathcal{P}_j$ of active computation processors at zero carbon cost as follows.

\smallskip
\noindent{\bf Processor selection. }
To choose a processor subset $\mathcal{P}_j$ for an interval $I_j$ with green power budget $\budget_j$, 
we solve an optimization problem to select a subset of processors
that maximizes the combined processor speed while respecting the given 
green power budget.
\new{This is formulated as a variant of the $0-1$ knapsack problem where items are the available computation processors
whose weight is $\workP{}$, the value is the speed $s(\cdot)$ of the processor,
and the knapsack capacity is given by $\max\{0, \tau (\budget_j - \mathcal{P}_{\text{base}})\}$.
The base power $\mathcal{P}_{\text{base}}$ is given by $\mathcal{P}_{\text{base}} = \sum_{i=1}^{P^2} \idleP{i}$.}
\iftoggle{showOld}{\old{
\begin{itemize}
    \item items are the available computation processors;
    \item the weight of an item is the dynamic power consumption of the processor;
    \item values are the speed values of the processors;
    \item the knapsack capacity is the given green power budget minus the base power $\mathcal{P}_{\text{base}}$ of the
    cluster, multiplied with a tuning parameter $\tau \in (0,1]$, where 
    \begin{equation*}
        $\mathcal{P}_{\text{base}} = \sum_{i=1}^{P^2} \idleP{i}.$
    \end{equation*} 
    The overall knapsack capacity is given by 
    \begin{equation*}
        $\max\{0, \tau (\budget_j - \mathcal{P}_{\text{base}})\}$.
    \end{equation*} 
\end{itemize}}}{}
To solve the corresponding dynamic program, we use a dynamic programming (DP) table and 
reconstruct the solution by backtracking through the table.
If the capacity is~$0$ (base power larger than green power budget), 
we choose $\mathcal{P}_j = \{p_{\min}\}$, where $p_{\min}$ is one of the processors with the smallest dynamic power consumption:
$\workP{p_{\min}} = \min_{1\leq k \leq P} \workP{p_k}$. 
This is to ensure that potential bottleneck tasks in the workflow can still be scheduled 
even if there is not much green power available. Thereby, and by choosing the fastest processor subsets, 
we incorporate some deadline-awareness in this phase.
\iftoggle{showOld}{\old{One reason behind introducing the tuning parameter $\tau$ for the dynamic program is as follows.
Note that we can have the situation that one task starts in a given interval $I_j$ but ends in interval $I_{j+1}$.
In this scenario, if the processor subset of interval $I_{j+1}$ is active in parallel, there would potentially be more
processors active at the same time than intended. Hence, to mitigate this effect, we multiply the potential green power budget by $\tau \leq 1$ to make
sure that we have some buffer.
An example of such a situation is shown in Figure~\ref{fig.dp_tuning_parameter}.}}{}
\new{The parameter $\tau\leq 1$ buffers against tasks that cross interval boundaries, preventing budget overruns when there are more processors
active than planned for the subsequent interval the tasks run into.
An example is shown in~\iftoggle{TR}{Appendix~\ref{appendix.algorithm-details.additional_figures}, Figure~\ref{fig.dp_tuning_parameter}}{\cite{review}, Appendix~B.1}.}
Overall, we obtain a map $\mathcal{P_{\_}} : \{I_1,\dots, I_J\} \longrightarrow \{\mathcal{P}_1, \dots, \mathcal{P}_J\}$, 
where $\mathcal{P}_j$ is the chosen processor subset for interval $I_j$.
We also provide pseudocode for this phase in~\iftoggle{TR}{Appendix~\ref{appendix.algorithm-details.further_pseudocode}, Algorithm~\ref{alg:processor-subset}}{\cite{review}, Appendix~B.3, Algorithm~1}. 

\smallskip
\noindent{\bf Initial Mapping and Scheduling. } 
First, \CWM ranks the tasks in the workflow using a similar definition
as the bottom levels in HEFT~\cite{HEFT}, i.\,e., we compute for every task $v\in V$:
    $\text{rank}(v) = \overline{\text{rt}(v)} + \max_{(v,w)\in E}(c(v,w) + \text{rank}(w)),$
where $\overline{\text{rt}(v)}$ is the mean running time of task $v$ in the cluster.
\iftoggle{showOld}{\old{Note that we use the edge weight of the edge instead of the mean communication time
(as in HEFT) because we assume a uniform communication bandwidth $\beta=1$.
The order in which we compute the rank values is given by a reverse topological order of the tasks.}}{}
Afterwards, we sort the tasks according to their rank in descending order.
To break ties between vertices with the same upward rank, we shuffle the tasks randomly before sorting.
Then, the algorithm iterates over the vertices in the order given by the bottom levels. 
Let $v\in V$ be the current task.
First, we have to determine which (preliminary) interval to use.
For this, we look at the earliest start time of~$v$, ignoring communication:
%
    $\text{EST}'(v) = \max_{(u,v)\in E}\left(\finish{u}\right)$.
%
Afterwards, we find the interval $[b_j, e_j) = I_j\in\{I_1,\dots,I_J\}$, such that $ b_j \leq \text{EST}'(v) < e_j$.
(We ignore the communication in this step since we would first have to find suitable spots for the 
messages -- which would be time-consuming.)
Then, \CWM tries to find an assignment for task~$v$ using the processor subset~$\mathcal{P}_j$.
For this, we follow a \algo{HEFT}~\cite{HEFT}-like procedure and iterate over each processor $p\in \mathcal{P}_j$. 
For each of these processors, we first look at the input arrivals $A^v_p$ of each predecessor $u$ of $v$. 
If $\mu(u) = \mu(v)$, no communication is necessary and hence we add the finish time of $u$ to the arrivals, i.\,e., 
%
    $A^v_p = A^v_p \cup \{\finish{u}\}.$
%
\iftoggle{showOld}{
\old{If $v$ and $u$ are located on different processors, we have to communicate the output of $u$
to the processor $\mu(v)$ of $v$.
By definition of our model, we have to find a suitable spot on the communication link for the connection $\mu(u) \rightarrow \mu(v)$.
We denote this link by $p_{(\mu(u), \mu(v))}$. Algorithmically, we have the following challenge here.
Since for each link we can only schedule one communication at a time, we have to find a suitable spot on the link.
Thus, we cannot only look at the maximum finish time of each predecessor message, but we must tentatively schedule the communication 
on the link to find the earliest possible and valid arrival time for the message from $u$ to $v$.
Such a situation is shown in Figure~\ref{fig.communication_scheduling}.
}}
\new{If $u$ and $v$ are on different processors, we must find a valid slot on the link $p_{(\mu(u),\mu(v))}$; since links are serialized, we tentatively schedule each message to obtain its true earliest arrival time.
Note that we cannot just look at the maximum finish time for each predecessor message since this might cause overlaps.
An example for such a situation is shown in~\iftoggle{TR}{Appendix~\ref{appendix.algorithm-details.additional_figures}, Figure~\ref{fig.communication_scheduling}}{\cite{review}, Appendix~B.1}.}
\new{To solve this challenge, we introduce copies of the actual communication links that we only create on demand. 
Otherwise, the link copies only exist as references to avoid unnecessary copies of the $O(P^2)$ communication links.
On these link copies, we then actually schedule the communications to find valid spots for them.}
Once we found the spot for the communication \new{$c_{u,v}=(u,v)\in E$}, we add the finish time to the arrivals, i.e., \new{$A^v_p = A^v_p \cup \{\finish{c_{u,v}}\}$.}
The order in which we schedule the messages is just the order in which we store them in memory.
Once we have gathered all arrivals for the predecessors of $v$, we have the real earliest start time 
%
\new{$\text{EST}(v) = \max_{(u,v)\in E}\left(\finish{c_{u,v}}\right) = \max_{t \in A^v_p}(t)$,
%
where $\finish{c_{u,v}} = \finish{u}$ if $\mu(u) = \mu(v)$.}
To restore the best decision later on without recomputing the communication spots, we store all the data transfers for this processor candidate.

Now that we have the earliest start time for $v$, we can find a spot for the task on the corresponding processor. 
For this, we look at the earliest gap on the processor that fits the task and respects $\text{EST}(v)$. 
In~\cite{HEFT}, this is called \textit{insertion-based strategy};
it yields the \textit{earliest finish time} $\text{EFT}_p(v)$ of task~$v$ on processor~$p$. 
The processor $p_v$ on which $v$ is scheduled is then simply given by the processor that minimizes $EFT_p(v)$ over all
candidate processors in $\mathcal{P}_j$.
Ties for $EFT_p(v)$ are broken uniformly at random.
Now, we have a preliminary start time $\start{v}'$ and a preliminary processor assignment $p'$ for task $v$. 
Note, however, that we ignored the communications for the processor subset selection. 
That is why we check whether the preliminary start time is indeed in the corresponding interval~$I_j$, i.e., we check if 
$b_j \leq {}\start{v}' < e_j$. 
If not, we look at the next interval~$I_{j+1}$ and the corresponding processor subset $\mathcal{P}_{j+1}$. 
Then, we repeat the same procedure as for interval~$I_j$, with the additional restriction that the task is not allowed to start
earlier than~$b_{j+1}$. This procedure is repeated at most three times or until we reach the end of the 
simulated time horizon. Preliminary experiments with other numbers of repetitions yielded no improvements.
After that, we just accept the solution found to limit the necessary
running time for scheduling and its resulting makespan.
At the end of this routine, we have a preliminary schedule,
which is subsequently refined with a local search described in Section~\ref{sec.local-search}.
Pseudocode for this procedure can be found in~\iftoggle{TR}{Appendix~\ref{appendix.algorithm-details.further_pseudocode}, Algorithm~\ref{alg:initial-mapping}}{\cite{review}, Appendix~B.3, Algorithm~2}.
\subsection{Deadline Repair}
\label{sec.deadline-repair}
In this section, we describe what we do if the initial schedule does not meet the given deadline~$D$. 
The intuition behind this routine is the following. When restricting the usable processors in the previous step, we limited the parallelism in the workflow.
Even though we still aimed for a short makespan, this might have led to deadline violations.
Hence, we now try to increase parallelism in the schedule by lifting the constraints on the processor subsets.
\iftoggle{showOld}{\old{First, we compute bottom levels again to get a ranking of the tasks.
This is done as described in Section~\ref{sec.deadline-agnostic-mapping} to prepare
for the subsequent rescheduling of some of the tasks.}}{}
\new{First, we re-rank tasks as in Section~\ref{sec.deadline-agnostic-mapping}.} 
In order to compute the set of tasks that we want to reschedule,
we look at all tasks $v$ that finish after a given threshold $\xi$, i.e., $\finish{v} > \xi$,
and add them to a set $\mathcal{R}'$. Additionally, we iterate over each task $v\in \mathcal{R}'$ and
add all its successors to $\mathcal{R}'$ via depth-first search.
This is necessary because rescheduling a task might influence its successor tasks.
Hence, the set of tasks that we want to reschedule is given by
%
    $\mathcal{R} = \{u \in V \ | \ \exists \ v \in \mathcal{R}' \text{ such that } \exists \text{ path } v \rightarrow u\}$.
%
This set of tasks is rescheduled now without any restrictions on the set of processors, i.\,e., we use \texttt{HEFT-SL}.
Here, we use the order of the tasks given by the ranking that we computed at the beginning of the routine.
Furthermore, we respect the assignments of all tasks $V \setminus \mathcal{R}$ and combine the schedules.
In the first iteration of this routine, we set $\xi = D$. If this is not sufficient to reach the deadline, 
we do a binary search over the deadline from $0$ to $D$, to find the largest threshold~$\xi$ such that
the deadline is met. 
\iftoggle{showOld}{\old{Note that it is in general desirable to exploit the whole deadline since this gives us the most
flexibility for carbon-aware decisions.}}{}
Note that if $\xi = 0$, this is essentially $\heft$, which is our fallback solution.
Therefore, the deadline that is achievable by $\heft$ (with the given random seed for tie-breaking) is the tightest
deadline that $\CWM$ can achieve.
If we meet the deadline, we additionally apply the local search algorithm described in Section~\ref{sec.local-search}.
We provide pseudocode for this procedure in~\iftoggle{TR}{Appendix~\ref{appendix.algorithm-details.further_pseudocode}, Algorithms~\ref{alg:deadline-repair},\ref{alg:rescheduling}}{\cite{review}, Appendix~B.3, Algorithms~4,5}.
\subsection{Local Search}
\label{sec.local-search}
The local search routine further refines the solution, once after the initial scheduling (see Section~\ref{sec.deadline-agnostic-mapping}), and once more after the deadline repair (see Section~\ref{sec.deadline-repair}).
We identify intervals where we use carbon-emitting power 
and try to shift tasks from such intervals to other ones.
The local search routine runs in several rounds;
a parameter $\phi$ defines the maximum number of iterations. 
\new{In each iteration, we identify the leftmost interval $I$ where
we exceed the green power budget.
From this interval, we shift a randomly selected task $v'$ together with 
all its successors, later-scheduled tasks and corresponding communications
by the minimal offset required to place $v'$ beyond $I$, while respecting the given deadline $D$.
For a more detailed description of this routine, we refer to 
\iftoggle{TR}{Appendix~\ref{appendix.algorithm-details.local_search_details}}{\cite{review}, Appendix~B.2}};
for pseudocode, we refer to \iftoggle{TR}{Appendix~\ref{appendix.algorithm-details.further_pseudocode}, Algorithm~\ref{alg:local-search}}{\cite{review}, Appendix~B.3, Algorithm~3}.
\iftoggle{showOld}{\old{
At each iteration, we first compute a set of refined intervals that correspond
to beginning/end of tasks, as well as the carbon cost in each of these intervals
(see details in~\cite{CaWoSched}, Appendix~A.1). 
Then, we iterate over the intervals from left to right. If we find an interval where the total power consumption exceeds the budget, this becomes 
the interval that we want to resolve. We denote this interval by $I=[b, e)$.
If we do not find one, we end the routine since the carbon cost is zero in this case. 
As already mentioned in Section~\ref{sec.deadline-agnostic-mapping},
\iftoggle{showOld}{\old{and visualized in Figure~\ref{fig.dp_tuning_parameter}}}{}
one main reason for exceeding the green power budget for a given interval is that some tasks 
start before the interval but continue running during it, i.\,e., 
tasks $v$ with  
    $\start{v} < b$ and
    $\finish{v} > b$.  
Hence, we look at \emph{all} tasks that run during this interval: 
$\mathbf{C}_I = \{v\in V \ | \  \start{v} < e \ \text{and} \ \finish{v} > b \}.$
We choose one task $v'$ uniformly at random in~$\mathbf{C}_I$, which we aim at shifting 
to a later interval to reduce the power consumption in interval~$I$.
However, we have to ensure that the shift does not violate the 
validity of the schedule. We also consider the tasks {\em following}~$v'$, 
from the task set\new{, i.e., all tasks that start later than $e$ and all successors of $v'$.}
\iftoggle{showOld}{\old{
\begin{align*}
    \mathbf{L}'_I & = \{v\in V \ | \  \start{v} \geq e\}  \\
    &\cup \{ v\in V \ |\   v\in\text{succ}(v') 
    \ \text{and} \ \start{v} < e\},
\end{align*} 
where $\text{succ}(v')$ is the set of all successors of $v'$.}}{}

Furthermore, we have to ensure that we do not cause overlaps on the processors after shifting tasks.
Hence, we look at every task $v \in \mathbf{L}'_I$ and add all tasks $u$ such that $\mu(u) = \mu(v)$
and $\start{u} \geq \finish{v}$ to a new set~$\mathbf{L}_I$. 
\iftoggle{showOld}{\old{Before we proceed, we perform $\mathbf{L}_I := \mathbf{L}_I \cup \mathbf{L}'_I$.}}{}
\new{We then set $\mathbf{L}_I := \mathbf{L}_I \cup \mathbf{L}'_I$.}
Afterwards, we determine the time units by which we shift tasks. Hence, we look at the minimum move that 
shifts $v'$ outside of the interval $I$, i.\,e.,
%
    $s_{v'} = e - \start{v'}.$
%
We also have the possibility to incorporate the given deadline~$D$ into the local search routine.
This becomes important when we refine the solution after the deadline repair step. 
If this is the case, we first compute the latest ending task in the schedule and denote it by~$v_l$. 
Then, we adapt the move to be 
%
$s_{v'} = \min\{(e - \start{v'}), (D-(\finish{v_l}))\}$.
%
To be computationally efficient, the idea is now to shift the whole set of tasks $\{v'\} \cup \mathbf{L}_I$ by the same move $s_{v'}$.
Additionally, we have to shift the corresponding communications as well. Hence, we look at each 
outgoing message of each task in $\{v'\} \cup \mathbf{L}_I$ and shift the communication by $s_{v'}$, too. 
This way, we avoid an expensive rescheduling of each communication individually.
\new{We prove the correctness of this approach in \iftoggle{TR}{Appendix~\ref{appendix.proof-local-search-correctness}, Lemma~\ref{lemma.local-search-correctness}.}{Appendix~A.}}
\iftoggle{showOld}{\old{
is shown through the following Lemma~\ref{lemma.local-search-correctness}
(\iftoggle{TR}{proof in Appendix~\ref{appendix.proof-local-search-correctness}}
{proof in \cite[Appendix~A]{review}}):
\begin{lemma}\label{lemma.local-search-correctness}
Let $s\in \mathbb{N}$ and let $\sigma$ be a valid schedule (see Section~\ref{sec.framework}).
Furthermore, let $v' \in V$ and $\mathbf{L}_I$
be defined as above. Shifting all tasks in $\{v'\} \ \cup \ \mathbf{L}_I$ and the corresponding outgoing communications by $s_{v'}$
transforms $\sigma$ into another valid schedule $\sigma'$.
\end{lemma}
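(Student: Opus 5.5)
The plan is to reduce validity of $\sigma'$ to a single closure property of the shifted set. Let $S=\{v'\}\cup\mathbf{L}_I$ and $s=s_{v'}\ge 0$. Let $M$ be the set of communication tasks $v_{i,j}$ with $v_i\in S$ and $\proc(v_i)\neq\proc(v_j)$; these are the outgoing messages that are shifted. Introduce the relation $x\prec y$ on all (computation and communication) tasks. It holds when $(x,y)$ is a precedence pair: either an edge $(v_i,v_j)$ between co-located tasks, or one of the pairs $v_i\to v_{i,j}$ and $v_{i,j}\to v_j$ for a remote edge. It also holds when $x$ and $y$ share a processor or link and $\start{y}\ge\sigma(x)+t(x)$. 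The central claim is that $S\cup M$ is upward closed under $\prec$: if $x\in S\cup M$ and $x\prec y$, then $y\in S\cup M$.

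Given this claim, validity is a case check over the constraints (i)--(iv) of Section~\ref{sec.framework}. Constraint (i) holds because every start time only increases, since $s\ge 0$. For (ii)--(iv), take any pair constrained by $\sigma$, either ordered by precedence or non-overlapping on a common resource, and orient it as $x$ before $y$. If both are shifted or both are fixed, the difference of their start times is unchanged, so the constraint still holds. If $x$ is fixed and $y$ is shifted, then $y$ only moves later and the constraint is preserved. The remaining case, $x$ shifted and $y$ fixed, is exactly what upward closedness excludes. The deadline is respected because, in the deadline-aware variant, $s\le D-(\finish{v_l})$, so every finish time stays at most $D$.

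The real work is proving the closure claim, which I would check generator by generator.
\begin{itemize}
\item Tasks with $\start{v}\ge e$: any successor or later task on the same processor starts at or after $\finish{v}>e$, so it also lies in $\mathbf{L}'_I\subseteq S$.
\item Descendants of $v'$: they are closed under successors by definition, reading $\text{succ}(v')$ as the set of all descendants.
\item Later tasks on the processor of an element of $\mathbf{L}'_I$: these are added by construction of $\mathbf{L}_I$.
\end{itemize}

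I expect the main obstacle in two places. The first is tasks that are added to $\mathbf{L}_I$ or lie after $v'$ on $\proc(v')$ yet start before $e$ without being descendants of $v'$. Their successors, and their own later same-processor tasks, are not obviously in $S$. The second is the communication links: a later message on the same channel may originate from a task outside $S$. My first attempt would be to show that such tasks start at or after $\finish{v'}$, and to argue inductively in the order of start times that everything scheduled after $v'$ on a touched resource is already captured. If that fails, I would read the construction of $\mathbf{L}_I$ (and the accompanying shift of messages) as the least fixpoint under the closure rules above, which is how the algorithm should be implemented. The case analysis in the previous paragraph then goes through verbatim.
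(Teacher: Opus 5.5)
Your reduction to upward closure is correct, and it is the same reduction the paper makes. Every item moves later by the same amount $s_{v'}\ge 0$, so validity can only break when a shifted item is followed (by precedence, or on a common processor or channel) by an unshifted one; the paper simply asserts ``by design'' that this never happens. The gap is that you never prove this closure claim. For your first obstacle, the missing idea is that $I=[b,e)$ is a \emph{refined} interval: it comes from the sweep line over all start and end times of tasks and messages, so none of these times lies strictly between $b$ and $e$. Hence $v'\in\mathbf{C}_I$, i.e.\ $\start{v'}<e$ and $\finish{v'}>b$, forces $\start{v'}\le b$ and $\finish{v'}\ge e$. It follows that every descendant of $v'$, every later task on $\mu(v')$, and every outgoing message of $v'$ starts at or after $e$. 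So the second part of $\mathbf{L}'_I$ is empty, every task added when forming $\mathbf{L}_I$ starts at or after $e$, and $\{v'\}\cup\mathbf{L}_I=\{v'\}\cup\{v\in V \mid \start{v}\ge e\}$. The problematic tasks you describe do not exist, and closure on processors takes one line: anything that follows a shifted item starts no earlier than that item ends, hence at or after $e$. This fact is indispensable. For a non-refined $I$, take $v'$ on $[b,b+1)$, an unrelated task $w$ on the same processor during $[b+1,b+3)$, and $e=b+2$. Then $w$ is not shifted, and $v'$, moved by $e-\start{v'}=2$, overlaps it. Your ``first attempt'' of showing that such tasks start after $\finish{v'}$ helps only once you combine it with $\finish{v'}\ge e$, and then no induction is needed.

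Your second obstacle is genuine, and no induction removes it: if only outgoing messages of shifted tasks move, closure fails on channels even for a refined $I$. Take $I=[0,1)$ and an isolated $v'$ on $[0,1)$. Put tasks $u$ and $x$ on processor $p$ during $[0,1)$ and $[1,2)$. Put messages $c_{x,y}$ during $[2,3)$ and $c_{u,w}$ during $[3,4)$ on the same channel $(p,q)$, with $y$ and $w$ on $q$ afterwards. Then $x\in\mathbf{L}_I$ but $u$ is not shifted, so moving $c_{x,y}$ by $1$ makes it overlap $c_{u,w}$. The lemma holds only under the reading that the paper's proof uses tacitly when it says links are handled ``by the same routine''. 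In that reading, messages are treated as tasks on channel processors, so every message starting at or after $e$ is shifted, and the one-line argument above then covers channels as well. Your fallback of redefining $\mathbf{L}_I$ as a least fixpoint of the closure rules does not prove the stated lemma. It makes closure true by definition, and so proves correctness of a modified procedure rather than the one described.
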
}}{}
}}{}
\section{Experimental Evaluation}
\label{sec.experiments}
\iftoggle{showBudget}{\DS{This should be roughly 4.5 pages (incl. pictures)}}{}
We evaluate the proposed scheduling algorithm $\CWM$ with an extensive set of simulation experiments against multiple competitors. 
\iftoggle{showOld}{\old{As a first carbon-agnostic baseline, we consider \texttt{HEFT-SL}, which is a version of the very popular \texttt{HEFT} algorithm~\cite{HEFT} that we adapted
with minimal changes to make it work in our communication framework.}}{}
%
\new{To this end, we adapted \texttt{HEFT}~\cite{HEFT} to our communication model; we denote the resulting 
carbon-agnostic algorithm by \heft.}
For the comparison to the state of the art, we include two configurations of our previous work \algo{CaWoSched}~\cite{CaWoSched};
it is the only competitor we know of that also optimizes for carbon cost within a comparable framework,
even though it further considers that the mapping and ordering is fixed. 
We hence use the initial mapping produced by \algo{HEFT-SL}, 
and the \algo{CaWoSched} code published at 
\url{https://github.com/KIT-EAE/CaWoSched}.
\new{We use the same parameter configuration as the original work~\cite{CaWoSched}, details can be found there. 
Furthermore, we use the variants \texttt{H-CWS-p} and \texttt{H-CWS-s}, where $-\texttt{p}$ and 
$-\texttt{s}$ represent the base score \textit{pressure} and \textit{slack}, respectively.}
Note that the abbreviations of the algorithm's names are introduced here for clarity.

We implemented all algorithms using C++ and compiled them with g++ (v.13.2.0) -O3.
The code used for the simulations is publicly available for reproducibility purposes 
\iftoggle{TR}{at \url{https://github.com/KIT-EAE/CWM.git}}{in the Zenodo repository~\cite{review}}. 
We provide all compiler options there. 
The simulations are executed on workstations with CentOS~8, 192 GB RAM, 
and $2\times 12$-Core Intel Xeon 6126 at 3.2 GHz.
To manage the experiments, we use the tool simexpal~\cite{simexpal}.
\subsection{Experimental Setup}
\label{sec.setup}
%
\iftoggle{showOld}{\old{The experimental setup is similar to the setup in~\cite{CaWoSched} since it corresponds to our main
competitor \algo{CaWoSched}:}}{} 
\new{We follow the setup of~\cite{CaWoSched}:}
\iftoggle{showOld}{\old{
\smallskip
\noindent{\bf Target computing platform. }
In general, all algorithms can handle a heterogeneous computing platform (in terms of memory size and processor speed).
Hence, we simulate two different heterogeneous computing platforms.
For ease of notation, we refer to them as the \textit{small cluster} and
the \textit{large cluster}. Each cluster contains multiple heterogeneous compute nodes,
where each compute node has an idle and an active power and a speed value as described
in Section~\ref{sec.framework}.
Their values are inspired by real-world machines\footnote{We use the following values from \url{https://www.spec.org/power_ssj2008/results/res2025q1/}: 'Average Active Power (W)' at a target load of $100\%$ 
for~$\workP{}$, 'Average Active Power (W)' at active idle for $\idleP{}$, and the 'ssj\_ops' 
(Server Side Java Operations) value divided by $10000$ (for normalization) as the speed value $s$.}. 
We use six different specification types; for the small [large] cluster, we use $12$ [$24$] nodes of each type, i.\,e., 
a total of $72$ [$144$] nodes. 
Since the communication links are usually responsible for a much smaller fraction of the total power 
usage~\cite{comm_links_justification}, we generate their idle and work powers with 
much smaller power values by a normal distribution. The bandwidth is set to $\beta=1$.

\smallskip
\noindent{\bf Workflows. }
%
In line with Section~\ref{sec.framework}, we represent the workflows as DAGs. 
To create them, we use the same tools and techniques as in~\cite{CaWoSched}:
we first take real-world workflows (atacseq, bacass, methylseq -- only a rescaled version is used, eager, chipseq) 
from~\cite{lotaru}, transform them to our model, and scale them with the WFGen generator~\cite{wfcommons} 
to the desired size.  The resulting 44 workflows have between $12$ and $\numprint{30000}$ vertices.  
The DAG weights are sampled 
from a normal distribution, where mean and standard deviation are derived from the speed values of the cluster. 

\smallskip
\noindent{\bf Profiles. }
%
For the power budget profiles, we use a different approach than~\cite{CaWoSched}.
We sample real carbon intensities from an Electricity Maps dataset~\cite{electricitymaps}, using the hourly data for Germany and for California in 2024. 
A given time horizon is first partitioned into intervals by computing random split points.
Thereby, we ensure that each interval length is uniform within a configurable integer range (we chose $[10, 50]$). 
For the intervals, we pick a contiguous subsequence of carbon intensity values uniformly at random. 
Depending on the cluster type, we then rescale the values as follows to invert the carbon
intensities, making the smallest carbon intensity correspond to the largest green power budget:
For the sequence of values, we denote by $[x_{\min}, x_{\max}]$ the range of carbon intensities. 
Then we map this range to a target power interval $[P_{\min}, P_{\max}]$, where we choose 
$P_{\min}$ to be the sum of all idle powers in the cluster and for $P_{\max}$ we add 
a fraction ($0.2$ for the small cluster, $0.4$ for the large cluster) of the sum of all dynamic power values
for the compute nodes and the communication links.
Then, any carbon intensity $x$ becomes 
\begin{equation}
\label{eq:carbon-intensity-power-budget}
    x^{\prime} = P_{\max} - \frac{x-x_{\min}}{x_{\max} - x_{\min}}(P_{\max} - P_{\min}).
\end{equation}
%

\smallskip
\noindent{\bf Further setup. }
%
$\algo{CWM}$ uses $\phi = 500$
for the local search proposed in Section~\ref{sec.local-search};
also, we set $\tau = 0.8$ [$\tau=0.6$] for the small [large] cluster (see Sec\-tion~\ref{sec.deadline-agnostic-mapping}).
As tightest deadline, we consider the makespan of the schedule produced by \algo{HEFT-SL}, denoted by $M$. Then for each 
algorithm $A \in \{\algo{CWM}, \algo{H-CWS-p}, \algo{H-CWS-s}\}$, we use as deadlines $D = \alpha \times M$ with
different scaling factors $\alpha > 1$ for more flexibility:
\begin{equation}\label{eq.deadlines}
    D \in \{1.2 \times M, 1.5 \times M, 2.0\times M\}.
\end{equation}

We do not consider the tightest setting where $\alpha=1$, since
$\algo{CaWoSched}$ uses a specific 
way of ordering communications and may not obtain
a schedule respecting the deadline \mbox{$D = M$}. 
Furthermore, 
there is an inherent trade-off between saving carbon emissions (our main goal) and
tighter deadlines.

With this setup (of clusters, workflows, profiles, deadlines), we test each algorithm on 
$2 \times 44 \times 2 \times 3 = 528$ instances. 
}}{}
\begin{table}[t]
\centering
\caption{Experimental setup summary. $M$: makespan computed by \heft.}
\label{table.setup}
\begin{tabular}{l || l r}
\hline
\textbf{Parameter / Input} & \textbf{Small cluster} & \textbf{Large cluster} \\
\hline
Nodes (6 spec types)     & $6\times 12 = 72$  & $6 \times 24 = 144$ \\
\hline
Number of Workflows      & \multicolumn{2}{c}{44} \\
\hline
Workflow sizes           & \multicolumn{2}{c}{12--\numprint{30000}} \\
\hline
Workflows                & \multicolumn{2}{c}{atacseq, bacass, methylseq, eager, chipseq~\cite{lotaru,wfcommons}} \\
\hline
Profiles                 & \multicolumn{2}{c}{Germany 2024, California 2024~\cite{electricitymaps}, hourly} \\
\hline
Interval lengths         & \multicolumn{2}{c}{$[10, 50]$ time units} \\
\hline
Deadlines $D$            & \multicolumn{2}{c}{$\alpha \times M$ for $\alpha \in  \{1.2,\, 1.5,\, 2.0\}$} \\
\hline
$\tau$ (processor selection) & $0.8$ & $0.6$ \\
\hline
$\phi$ (local search)    & \multicolumn{2}{c}{$500$} \\
\hline
Bandwidth $\beta$        & \multicolumn{2}{c}{$1$} \\
\hline
\hline
Total instances          & \multicolumn{2}{c}{$2 \times 44 \times 2 \times 3 = 528$} \\
\hline
\end{tabular}
\end{table}
\new{The key parameters are summarized in Table~\ref{table.setup}. Node power and speed values are derived from 
real-world SPEC benchmarks\footnote{From \url{https://www.spec.org/power_ssj2008/results/res2025q1/} as of December 2025:
\textit{Average Active Power} at 100\% load for $\workP{}$, at idle for $\idleP{}$,
and \textit{ssj\_ops}$/10000$ for $s$.}.
Communication link powers are drawn from a normal distribution with much smaller power values~\cite{comm_links_justification}.
DAG weights are sampled from a normal distribution, where mean and standard deviation are derived from the speed values of the cluster.
For more details on how we obtained these DAGs, we refer to~\iftoggle{TR}{Appendix~\ref{appendix.setup-details}}{\cite{review}, Appendix~C}.
The carbon intensities are inverted and linearly rescaled to a cluster-specific power interval
$[P_{\min}, P_{\max}]$ 
(see~\iftoggle{TR}{Appendix~\ref{appendix.setup-details}}{\cite{review}, Appendix~C}).
Furthermore, we exclude the case $\alpha=1$, since
$\algo{CaWoSched}$ uses a specific 
way of ordering communications and may not obtain
a schedule respecting the deadline $D = M$, and tighter deadlines inherently limit carbon savings.}
\subsection{Results}
%
We mainly focus  on the carbon cost of the  schedules produced by each algorithm, since all schedules
respect the target deadline. 
Also, we analyze the time \new{used by the algorithms} 
to compute the schedules and we discuss the trade-off 
between carbon cost and time. 

\smallskip
\noindent{\bf Algorithm Ranking. }
\label{par.ranking}
To rank  the different algorithms in terms of carbon cost,
we use performance profiles, following the Dolan-Mor\'e methodology~\cite{performance_profiles}. 
\iftoggle{showOld}{\old{
Given an instance $i$ and an algorithm $A\in \{\algo{CWM}, \algo{H-CWS-p}, \algo{H-CWS-s}, \heft\}$,
we consider its carbon cost $\cost_{i, A}+1$. Then, we look at the carbon cost ratio $r_{i,A} := (\cost_{i, A}+1) / (\cost^*_i+1)$,
where $\cost^*_i$ is the best carbon cost found for  instance~$i$.
Note that we shift the cost by one to avoid a division by zero if some algorithm finds an optimal solution.
While shifting might cause very large ratios if one algorithm finds the optimal solution and others do not,
this approach has in our experience no unfair influence for the competitors on the algorithm ranking.
After calculating the carbon cost ratios, we compute for each algorithm $A$ the fraction of instances
on which the carbon cost ratio of $A$ is within a factor $\delta$ of the best quality, i.\,e.,
\begin{equation*}
    \eta_A(\delta) = \frac{1}{N}|\{\text{Instances } i \ | \ r_{i, A} \leq \delta \}|.
\end{equation*}
These values are then plotted for a given set of thresholds~$\delta$.
}}{}
\new{Given an instance $i$ and an algorithm $A\in \{\algo{CWM}, \algo{H-CWS-p}, \algo{H-CWS-s}, \heft\}$,
we compute its carbon cost $\cost_{i, A}$ and we define $r_{i,A} := (\cost_{i,A}+1)/(\cost^*_i+1)$, where $\cost^*_i$ is the best cost found for instance~$i$. 
Note that costs are shifted by $+1$ to avoid division by zero. We report both geometric means and medians, 
as the latter are more robust to this shift.
Additionally, note that the absolute carbon costs have high values, so that adding $1$ is often negligible
if both algorithms compute schedules with positive carbon costs.}
%
\iftoggle{showOld}{\old{Again,
we shift the carbon cost by $1$ to avoid division by zero and to be able to compute geometric means.
The choice of the constant we add to the carbon cost influences
the geometric mean results. Thus, we do not only report the geometric means but also the medians; the latter are much
more robust when it comes to shifting results.  Additionally, note
that the absolute carbon costs have high values, so that adding $1$ is often negligible
if both algorithms compute schedules with positive carbon cost.}}{} 
%
%
%
\new{We then plot performance profiles $\rho_A(\tau) = \frac{1}{N}|\{i \mid r_{i,A}\leq\tau\}|$ for varying~$\tau$.}

In this section, we group the instances by the deadline~$D$.
A more detailed parameter study is provided later in this section. 
First, we look at the performance profile for the deadline $D=1.5\times M$ (left) in Figure~\ref{fig.performance-profile_abs-time}.

\begin{figure}[t]
\begin{subfigure}{0.45\textwidth}
\centerline{\includegraphics[width=\linewidth]{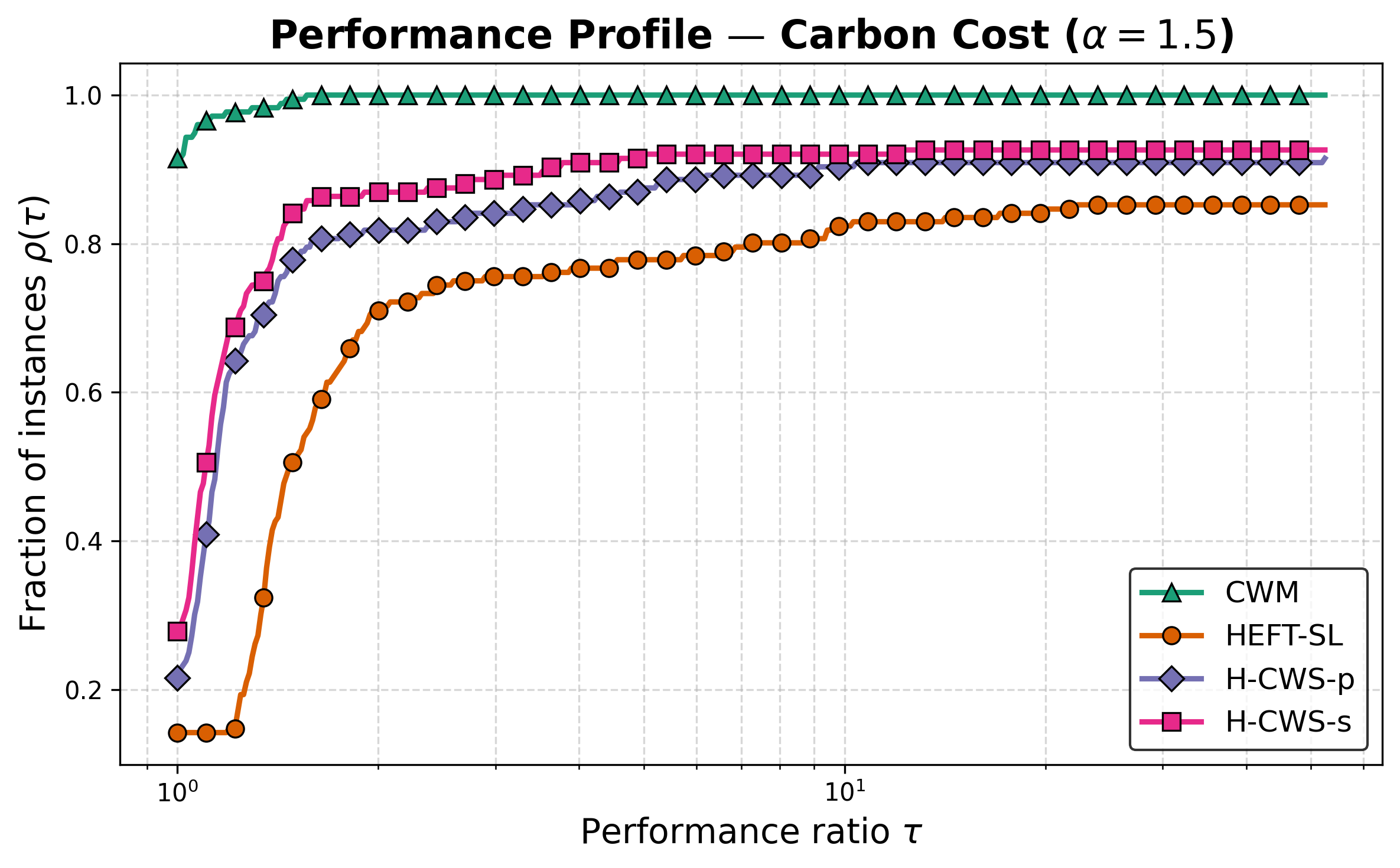}} 
\end{subfigure}
\hfill
\begin{subfigure}{0.46\textwidth}
\centerline{\raisebox{.3cm}{\includegraphics[width=\linewidth]{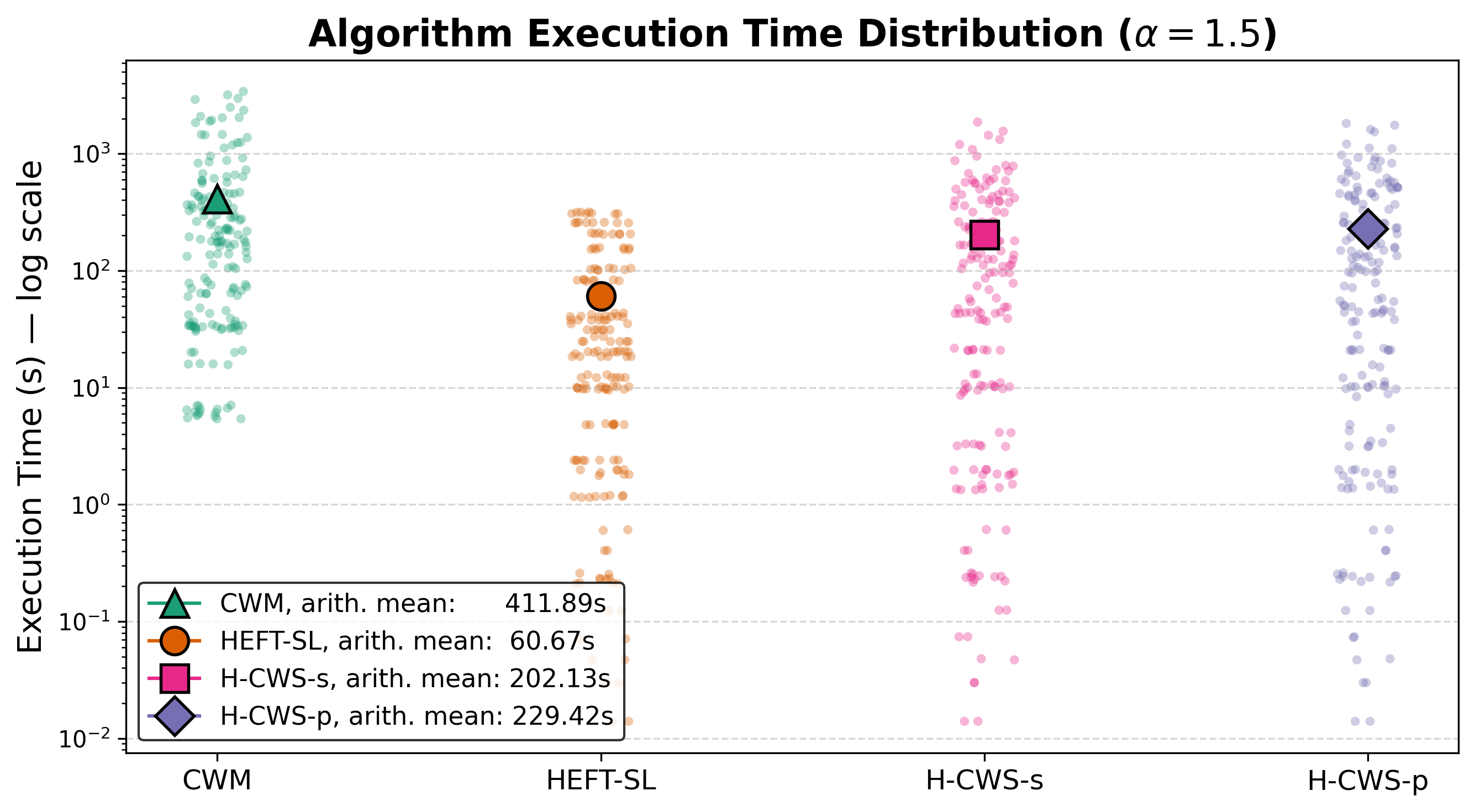}}}
\end{subfigure}
\caption{Performance profile (left) and absolute execution time in seconds (right) for deadline $D=1.5\times M$.}
\label{fig.performance-profile_abs-time}
\end{figure}
Since the \CWM curve is consistently at the top, our algorithm performs best. 
In particular, note that the best found solution is computed 
by $\CWM$ in $\approx 91.48\%$ of the instances.
Furthermore, in accordance with the experimental results in~\cite{CaWoSched}, we
see that $\texttt{H-CWS-s}$ and $\texttt{H-CWS-p}$ perform similarly -- with $\texttt{H-CWS-s}$
slightly better. As expected, the carbon-agnostic baseline $\texttt{HEFT-SL}$
yields the worst quality.
For a tighter deadline, \CWM, $\slack$, and $\pressure$ are closer to each other, while with
a larger flexibility in the deadline, the dominance of \CWM becomes even clearer, see~\iftoggle{TR}{Appendix~\ref{appendix.further-results}}{\cite{review}, Appendix~D}. 
This is to be expected, since there is less / more flexibility for the algorithms to optimize. 
%
%

\smallskip
\noindent{\bf Carbon cost analysis. }
\label{par.cost-analysis}
To quantify the findings from the algorithm ranking,
we look at cost ratios when we compare 
$\CWM$ against the competitors. For an algorithm $A\in \{\slack, \pressure, \heft\}$ 
and an instance~$i$, we look at the carbon cost ratio $(\cost_{i, \CWM}+1) / (\cost_{i, A}+1)$. 
%
%
%
\begin{figure}[tb]
\centering 
\begin{subfigure}{0.45\textwidth}
  \centerline{\includegraphics[width=\linewidth]{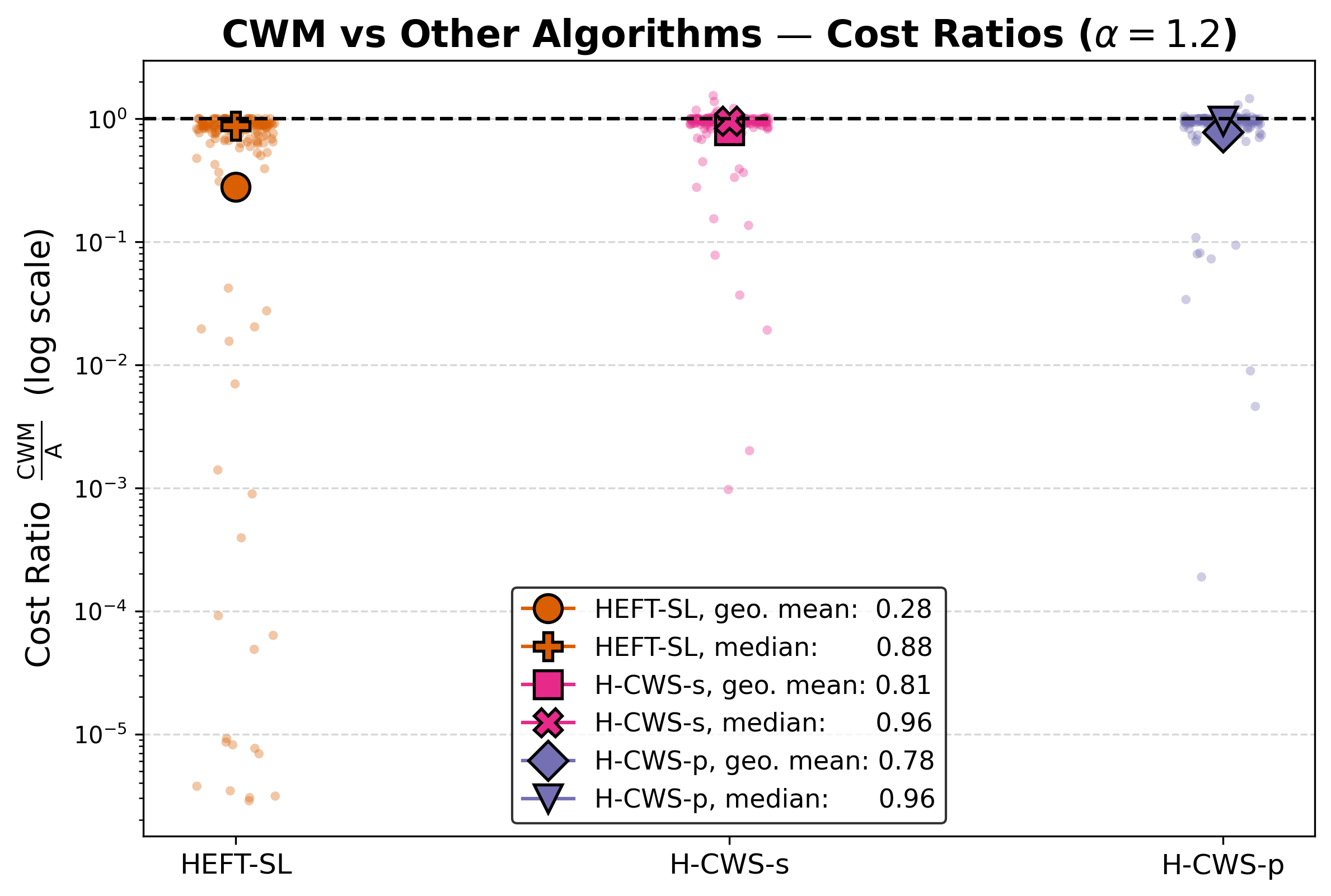}}
\end{subfigure}
\hfill
\begin{subfigure}{0.45\textwidth}
  \centerline{\includegraphics[width=\linewidth]{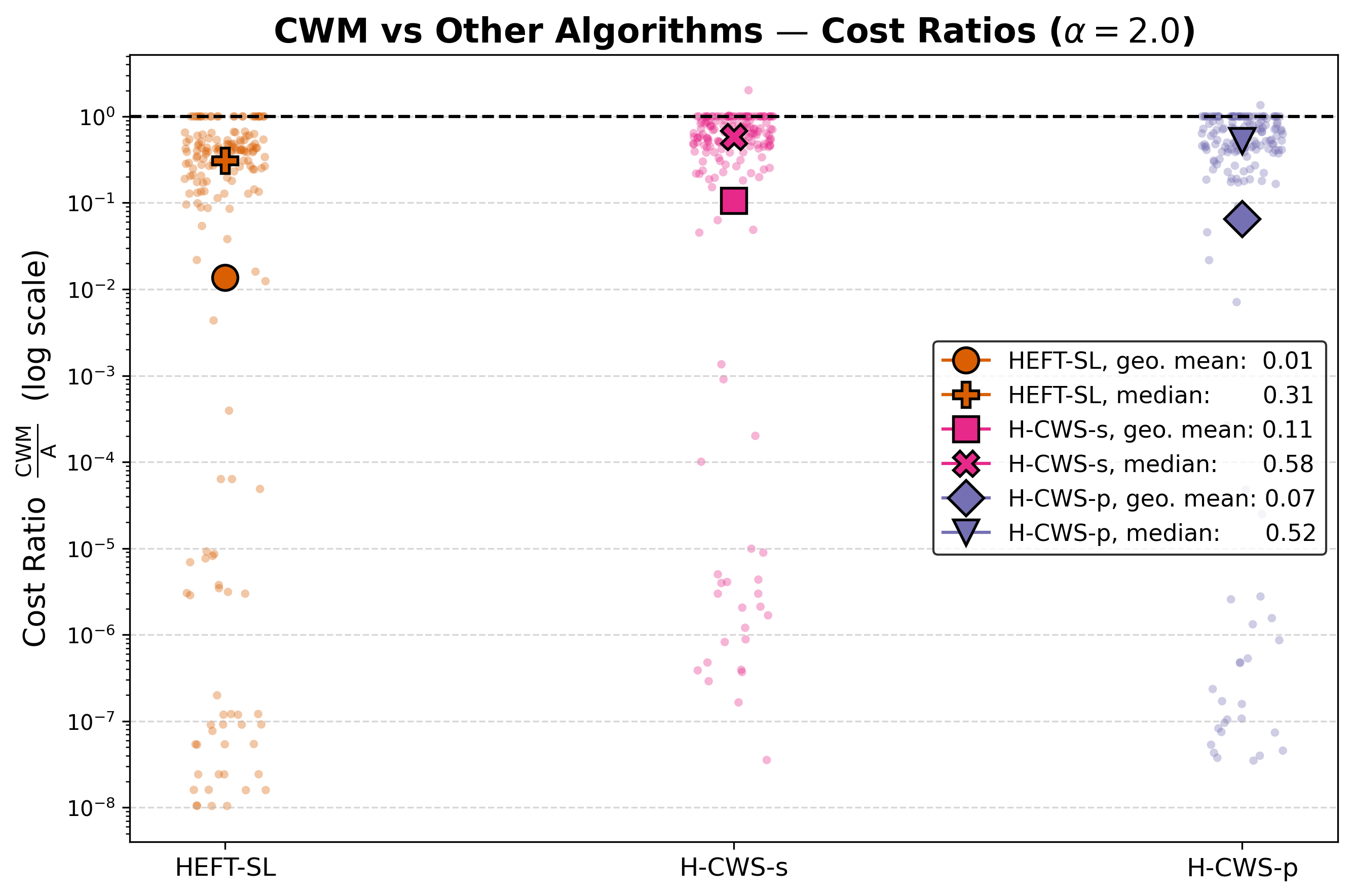}}
\end{subfigure}
\caption{Carbon cost ratios compared to $\CWM$ for deadlines 
$D=1.2\times M$ (left) and $D=2.0\times M$ (right).}
\label{fig.cost_ratios_12_20}
\end{figure}
%
We show the cost ratios for the deadline $D=1.2\times M$ in Figure~\ref{fig.cost_ratios_12_20} (left).
There are several instances for which $\CWM$ significantly improves over the other algorithms. 
This can also be seen by the geometric means (numbers are stated in the legend), 
which weighs the significant improvements more heavily. 
At the same time, the median improvement by \CWM over $\slack$ and $\pressure$ is rather small
(while improvements over $\heft$ are already noteworthy).
%
%
%
%
With higher flexibility, 
we see significant improvements also for the 
median, see Figure~\ref{fig.cost_ratios_12_20} (right). 
Compared to $\pressure$ and \slack, we almost halve the median carbon cost \new{($48\%$ and $42\%$ decrease, respectively).} 
\new{Note} that there are numerous instances for which $\CWM$ computes an optimal 
solution, while the others yield a non-zero carbon cost. 
In these cases, the geometric mean is less meaningful than the median. 
%
%
%
%
%
%
%
%
%
%
%

\smallskip
\noindent{\bf Time analysis. }
\label{par.time-analysis}
To briefly analyze 
the time the algorithms take 
to compute the schedules, 
Figure~\ref{fig.performance-profile_abs-time} (right) displays the total running time in seconds for 
the deadline $D=1.5\times M$. 
For a fair comparison, the running times of $\pressure$ and $\slack$ include
the time for computing an initial mapping with $\heft$.
\iftoggle{showOld}{\old{
\begin{figure}[tb]
\centering 
\begin{subfigure}{0.48\textwidth}
  \centerline{\includegraphics[width=\linewidth]{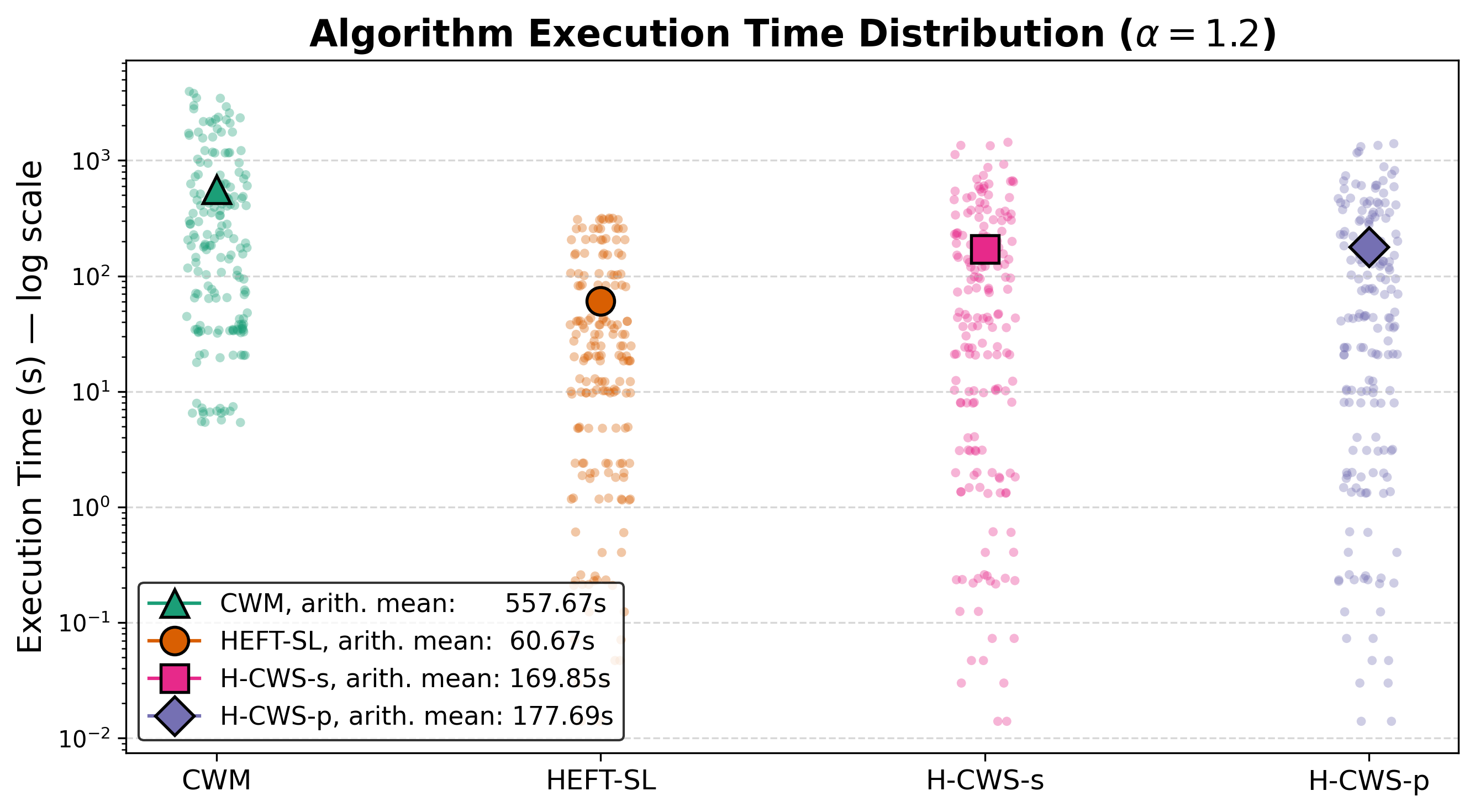}}
\end{subfigure}
\hfill
\begin{subfigure}{0.48\textwidth}
  \centerline{\includegraphics[width=\linewidth]{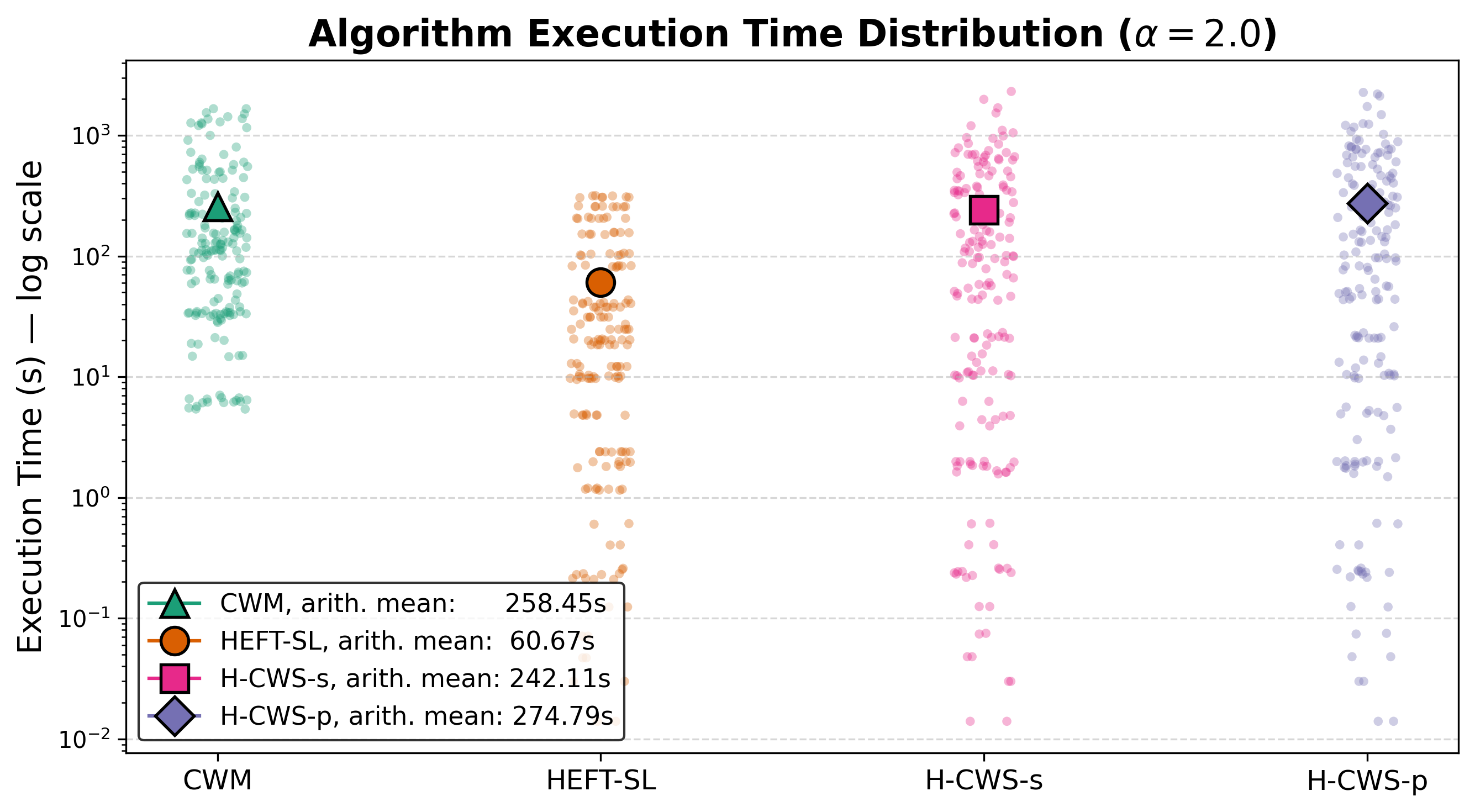}}
\end{subfigure}
\caption{Absolute execution time (seconds) for each algorithm for deadlines 
$D=1.2\times M$ (top) and $D=2.0\times M$ (bottom).}
\label{fig.abs_time_12_20}
\end{figure}
}}{}

For this deadline, $\CWM$ is roughly $2\times$ slower than the \algo{CaWoSched} algorithms
and more than $6\times$ slower than \heft.
This is to be expected since $\heft$ is carbon-agnostic and 
$\slack$ and $\pressure$ only shift tasks within a given mapping.
Moreover, note that $\CWM$ has to apply the repair routine of Section~\ref{sec.deadline-repair} 
more often for a tighter deadline than for larger deadlines. 
Hence, for $D=2.0\times M$, since the costly deadline repair does not have to be used, 
$\CWM$ is about as fast as $\slack$ and $\pressure$ (see\ \iftoggle{TR}{Appendix~\ref{appendix.further-results}}{\cite{review}, Appendix~D}), 
while $\CWM$ achieves much better carbon cost values in this case.
%

\smallskip
\noindent{\bf Parameter study. }
\label{par.parameter-study}
\iftoggle{showOld}{\old{
In this section, we present additional aspects of the experimental evaluations.
Due to space constraints, we only describe the results, 
while all corresponding plots
can be found in~\iftoggle{TR}{Appendix~\ref{appendix.further-results}}{\cite{review}, Appendix~D}.
In particular, we provide a more detailed analysis on the influence of the different
parameters such as the cluster size, the power profile, and the workflow size.
First, we show the additional plots regarding the deadline $D=1.5\times M$ for the performance profile, the cost ratios,
and the absolute execution times in \iftoggle{TR}{Appendix~\ref{appendix.further-results}, Figures~\ref{fig.performance-profile15},~\ref{fig.cost_ratios_15},~\ref{fig.abs_time_15}}{\cite{review}, Appendix~D}. 
There, one can observe that the trend for the deadlines $D=1.2\times M$ and $D=2.0\times M$ is confirmed:
$\CWM$ improves the competitors more with more deadline flexibility and the difference in running time
of $\CWM$ compared to $\slack$ and $\pressure$ becomes less if more deadline flexibility is given.
Also,
we show, for each deadline~$D$, plots for the absolute carbon cost,
including distributions in \iftoggle{TR}{Appendix~\ref{appendix.further-results}, Figures~\ref{fig.abs_cost_12},~\ref{fig.abs_cost_15} and~\ref{fig.abs_cost_20}}{Appendix~D}.
The same trend as for the cost ratios can be observed:
the more [less] flexible the deadline, the larger [smaller] the average carbon cost improvement yielded by \CWM.
Next, we can observe that the number of available compute nodes obviously influences the cluster utilization if the workflow is fixed.
That is why the cluster size indeed influences the relative performance of the algorithms.
For the deadline $D=2.0\times M$, the general trend is that \CWM achieves a higher relative improvement on the large cluster than on the small one\iftoggle{TR}{, see Appendix~\ref{appendix.further-results}, Figures~\ref{fig.cost_ratios_20_smallCluster} and~\ref{fig.cost_ratios_20_largeCluster}}{}. 
The effect is reversed only when comparing to \slack.
$\CWM$ is apparently more efficient when it comes to utilizing the additional processors.
This is to be expected, since $\CWM$ does not only focus on time shifts, but also it
considers the cluster size.
Moreover, on the large cluster, we observe a larger number of instances with an extremely low cost ratio for this deadline.
We conclude from this that for a larger cluster and $D=2.0\times M$, $\CWM$ achieves more optimal solutions than the competitors.
When the flexibility is lower ($D=1.5\times M$\iftoggle{TR}{, Appendix~\ref{appendix.further-results}, Figures~\ref{fig.cost_ratios_15_smallCluster} and~\ref{fig.cost_ratios_15_largeCluster}}{}), 
the number of extremely low cost ratios is also higher on the large cluster.
Yet, the median carbon cost ratios do not differ that much between the clusters.
Apparently, the lower flexibility limits the optimization capabilities of \CWM here.
The two different power profiles do not have a large influence
on the performance of the algorithms, for carbon cost ratios with $D=1.5\times M$\iftoggle{TR}{(see Appendix~\ref{appendix.further-results}, Figures~\ref{fig.cost_ratios_15_germany} and~\ref{fig.cost_ratios_15_california})}{}.
The only difference is that $\CWM$ seems
to find more optimal solutions for the Germany profile than for California in comparison to the competitors -- due to the data points
with extremely low cost ratio and the very low geometric mean.
Finally, we show the influence of the workflow size\iftoggle{TR}{ in Appendix~\ref{appendix.further-results}, Figures~\ref{fig.cost_ratios_15_realWorld}-%
\ref{fig.cost_ratios_15_large}}{} 
for the deadline $D=1.5\times M$. There, we can see the general trend
that $\CWM$ performs the better the larger the workflow is -- in comparison to its
competitors. For the tiny real-world workflows, we see almost no improvement.
Since they have at most $60$ tasks, most algorithms
find a carbon-optimal schedule. However, this is not the case
for the small ($\approx 200 - 4000$ tasks) and medium-sized ($\approx 8000 - 15000$ tasks)
workflows. While the respective average improvement of \CWM is rather modest (in terms of median), it becomes clearly better for large workflows ($\approx 20000 - 30000$ tasks).
}}{}
\new{Additional plots for all parameter influences are in~\iftoggle{TR}{Appendix~\ref{appendix.further-results}}{\cite{review}, Appendix~D}. 
In summary: (i)~the quality gains of \CWM over its competitors is confirmed by results for additional deadlines,
i.e.,~\CWM improves the competitors with more deadline flexibility\iftoggle{TR}{, see~Appendix~\ref{appendix.further-results}, Figures~\ref{fig.performance-profile_12_20},\ref{fig.cost_ratios_15},\ref{fig.abs-cost_12_15_20}}{};
(ii)~the difference in running time becomes smaller with more flexibility\iftoggle{TR}{, see~Appendix~\ref{appendix.further-results}, Figure~\ref{fig.abs-time_12_20}}{};
(iii)~\CWM benefits more from a larger cluster, finding more often solutions of carbon cost zero, 
especially at $D=2.0\times M$\iftoggle{TR}{, see Appendix~\ref{appendix.further-results}, Figure~\ref{fig.cost-ratios_cluster-size}}{};
(iv)~the two power profiles (Germany, California) yield similar relative rankings, with \CWM finding more 
solutions of carbon cost zero on the Germany profile\iftoggle{TR}{, see Appendix~\ref{appendix.further-results}, Figure~\ref{fig.cost-ratios_cal_ger}}{};
and (v)~\CWM's advantage grows with workflow size, from negligible gains on tiny workflows ($\leq 60$ tasks) to clearly superior performance on large ones ($\approx 20000$--$30000$ tasks)\iftoggle{TR}{, see Appendix~\ref{appendix.further-results}, Figure~\ref{fig.cost-ratios_size}}{}.}
%
\section{Conclusion}
\label{sec.conclusion}
\iftoggle{TR}{}{}
\iftoggle{showBudget}{\DS{This should be roughly 0.3 pages}}{}
We have presented a scheduling approach that takes carbon-aware decisions in both
the mapping and scheduling phases, while respecting a user-defined deadline. The scheduler 
restricts the parallelism in the cluster depending on the availability of green power, while it 
can also lift these constraints to find schedules that respect the deadline.
We also provide a local search approach that is capable of further improving the solution.
Experiments show that the proposed $\CWM$ algorithm  clearly dominates state-of-the-art solutions when
there is sufficient deadline flexibility. 
For example, for the largest level of flexibility, we improve $\texttt{CaWoSched}$ by decreasing their carbon cost by $42\%$ and $48\%$, respectively, 
depending on the configuration of their setup. This means that we nearly halve the carbon cost in this setting.
Compared to the carbon-agnostic baseline, we are able to decrease the carbon cost by~$69\%$. 
\iftoggle{showOld}{\old{
In future work, we plan to explore how to adapt the algorithm to an online setting
and how to handle uncertainty in task running times and the green power budget. 
It would also be interesting to consider shutting down some processors when 
there is enough slack in the schedule and a limited power budget; an updated model
and novel algorithms would be needed in this setting.}}{}
\new{Future work includes adapting the algorithm to an online setting, handling uncertainty 
in task running times and green power budget, and exploring processor shutdown strategies
when suitable schedule slack and power constraints are given -- the latter requiring an 
updated model and novel algorithms.}

\smallskip
\noindent{ \bf Acknowledgments. }
\begin{small}
This work is partially supported by Collaborative Research Center (CRC) 1404 FONDA --
\textit{Foundations of Workflows for Large-Scale Scientific Data Analysis}, 
which is funded by German Research Foundation (DFG).
%
The authors designed and implemented the algorithms and wrote the text; they used AI-based
tools (GitHub Copilot, ChatGPT, Claude) to support this process.
\end{small}

\bibliographystyle{splncs04}
\bibliography{references}


\iftoggle{TR}{
\appendix
\section{Proof of Theorem~\ref{thm.approx_DAG}}
\label{appendix.thmProof}
\begin{proof}
The (decision version of the) problem obviously is in NP: given a schedule (all tasks are mapped
onto the same processor) and a bound on carbon cost, it is easy to check in polynomial time 
that all constraints are respected, and that the carbon cost does not exceed the bound. 
The NP-hardness is obtained with a reduction from 3-partition~\cite{garey1979computers}: 
given an instance of 3-partition with $3n$ positive integers $a_{i}$ of total sum $nB$, we build an instance with 
$3n$ tasks of duration $a_{i}$ and a single unit-speed processor with a power profile consisting of
$n$ intervals of size~$B$ having a green power budget of~$1$, separated by $n-1$ intervals 
of size~$1$ and green power budget~$0$. 
The goal is to find a solution
of carbon cost~$0$. There is a clear equivalence between the two problems, since the only
way to achieve such a solution is to have a 3-partition of the independent tasks. 
Now, assume there is a  polynomial-time algorithm~$\mathsf{A}$  with 
        $\mathcal{CC}(\mathsf{A}) \leq \lambda \cdot \mathcal{CC}^{*}$.
 Then, if $\mathcal{CC}^{*} = 0$, we have $\mathcal{CC}(\mathsf{A}) = 0$. Otherwise, $\mathcal{CC}^{*} > 0$
 and $\mathcal{CC}(\mathsf{A})\geq \mathcal{CC}^{*}  > 0$. Applying algorithm~$\mathsf{A}$
 on the previous problem instance hence allows us to answer in polynomial time whether there is
 a 3-partition or not, hence proving P=NP. 
\end{proof}
\clearpage
\section{Algorithm Details and Pseudocode}
\label{appendix.algorithm-details}
\subsection{Additional Figures}
\label{appendix.algorithm-details.additional_figures}
\begin{figure}[h]
  \centerline{\includegraphics[width=0.4\linewidth]{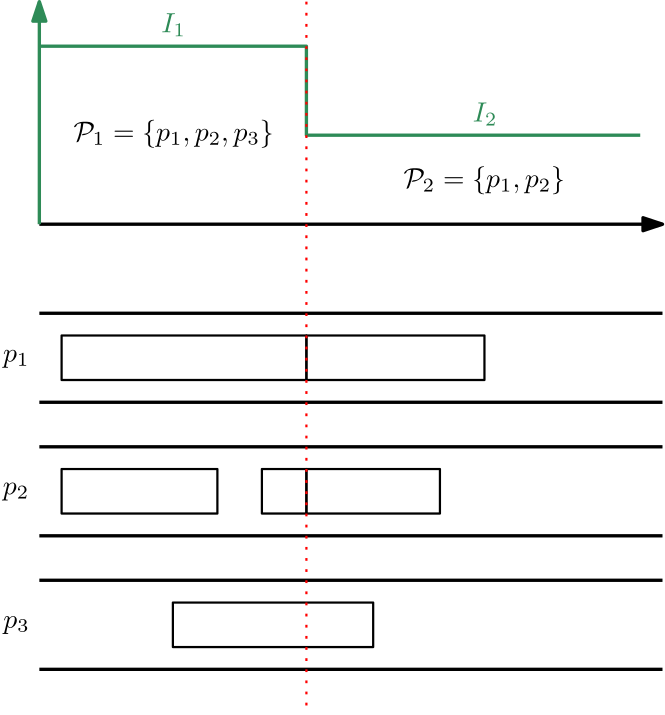}}
  \caption{For interval $I_1$, all processors are allowed to be active, while in the second interval $I_2$, only $p_1$ and $p_2$
  should be active. However, the task on $p_3$ finishes in interval $I_2$. Hence, all three processors are active, 
  potentially violating the given green power budget in $I_2$.}
  \label{fig.dp_tuning_parameter}
\end{figure}
\begin{figure}[h]
    \centerline{\includegraphics[width=0.8\linewidth]{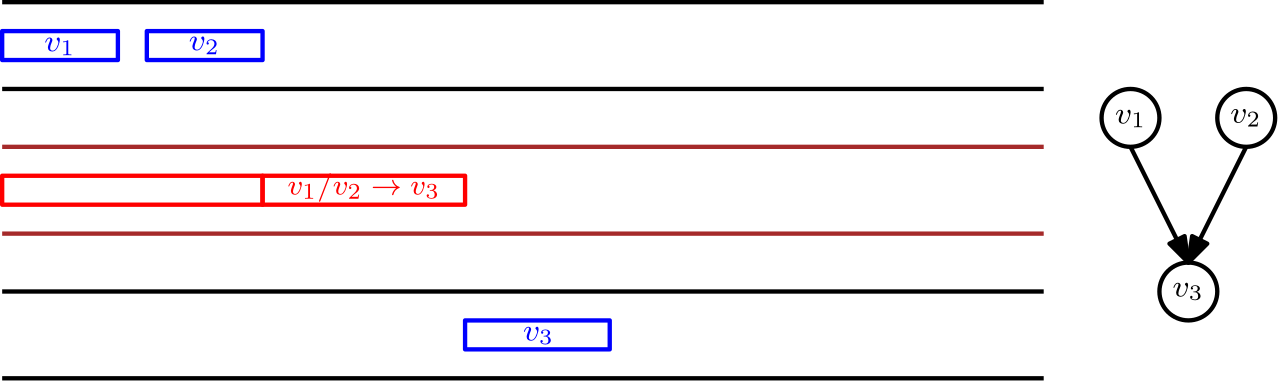}}
    \caption{If we would find for each message the earliest gap on the link and just take the maximum of the finish times, the output of $v_1$
    and $v_2$ would be scheduled at the same time.}
    \label{fig.communication_scheduling}
\end{figure}
\clearpage
\subsection{Local Search Details}
\label{appendix.algorithm-details.local_search_details}
At each iteration, we first compute a set of refined intervals that correspond
to beginning/end of tasks, as well as the carbon cost in each of these intervals
(for details on how to compute the carbon cost efficiently, see~\cite{CaWoSched}, Appendix~A.1). 
Then, we iterate over the intervals from left to right. If we find an interval where the total power consumption exceeds the budget, this becomes 
the interval that we want to resolve. We denote this interval by $I=[b, e)$.
If we do not find one, we end the routine since the carbon cost is zero in this case. 
As already mentioned in Section~\ref{sec.deadline-agnostic-mapping}
and visualized in Figure~\ref{fig.dp_tuning_parameter},
one main reason for exceeding the green power budget for a given interval is that some tasks 
start before the interval but continue running during it, i.\,e., 
tasks $v$ with  
    $\start{v} < b$ and
    $\finish{v} > b$.  
Hence, we look at \emph{all} tasks that run during this interval: 
$\mathbf{C}_I = \{v\in V \ | \  \start{v} < e \ \text{and} \ \finish{v} > b \}.$
We choose one task $v'$ uniformly at random in~$\mathbf{C}_I$, which we aim at shifting 
to a later interval to reduce the power consumption in interval~$I$.
However, we have to ensure that the shift does not violate the 
validity of the schedule. We also consider the tasks {\em following}~$v'$, 
from the task set\new{, i.e., all tasks that start later than $e$ and all successors of $v'$:}
\begin{align*}
    \mathbf{L}'_I & = \{v\in V \ | \  \start{v} \geq e\}  \\
    &\cup \{ v\in V \ |\   v\in\text{succ}(v') 
    \ \text{and} \ \start{v} < e\},
\end{align*} 
where $\text{succ}(v')$ is the set of all successors of $v'$.

Furthermore, we have to ensure that we do not cause overlaps on the processors after shifting tasks.
Hence, we look at every task $v \in \mathbf{L}'_I$ and add all tasks $u$ such that $\mu(u) = \mu(v)$
and $\start{u} \geq \finish{v}$ to a new set~$\mathbf{L}_I$. 
\iftoggle{showOld}{\old{Before we proceed, we perform $\mathbf{L}_I := \mathbf{L}_I \cup \mathbf{L}'_I$.}}{}
\new{We then set $\mathbf{L}_I := \mathbf{L}_I \cup \mathbf{L}'_I$.}
Afterwards, we determine the time units by which we shift tasks. Hence, we look at the minimum move that 
shifts $v'$ outside of the interval $I$, i.\,e.,
\begin{equation*}
    s_{v'} = e - \start{v'}.
\end{equation*}
We also have the possibility to incorporate the given deadline~$D$ into the local search routine.
This becomes important when we refine the solution after the deadline repair step. 
If this is the case, we first compute the latest ending task in the schedule and denote it by~$v_l$. 
Then, we adapt the move to be 
\begin{equation*}
s_{v'} = \min\{(e - \start{v'}), (D-(\finish{v_l}))\}.
\end{equation*}
To be computationally efficient, the idea is now to shift the whole set of tasks $\{v'\} \cup \mathbf{L}_I$ by the same move $s_{v'}$.
Additionally, we have to shift the corresponding communications as well. Hence, we look at each 
outgoing message of each task in $\{v'\} \cup \mathbf{L}_I$ and shift the communication by $s_{v'}$, too. 
This way, we avoid an expensive rescheduling of each communication individually.
The correctness of this approach is shown through the following Lemma~\ref{lemma.local-search-correctness}
\begin{lemma}
\label{lemma.local-search-correctness}
Let $s\in \mathbb{N}$ and let $\sigma$ be a valid schedule (see Section~\ref{sec.framework}).
Furthermore, let $v' \in V$ and $\mathbf{L}_I$
be defined as above. Shifting all tasks in $\{v'\} \ \cup \ \mathbf{L}_I$ and the corresponding outgoing communications by $s_{v'}$
transforms $\sigma$ into another valid schedule $\sigma'$.
\end{lemma}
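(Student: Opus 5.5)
We prove that $\sigma'$ satisfies the four validity constraints (i)--(iv) of Section~\ref{sec.framework}, plus the serialization constraint on communication channels. Write $S=\{v'\}\cup\mathbf{L}_I$ for the shifted task set, and let $S_c$ be the set of communications $v_{i,j}$ with $v_i\in S$. We assume $s_{v'}\ge 0$, which holds since $\start{v'}<e$ and the deadline-aware variant only takes the minimum with a nonnegative quantity. Each constraint relates a pair of objects, either tasks or communications. We split into cases according to which members of the pair move: both, neither, or exactly one.

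\textbf{Both or neither move.} If both move, or neither does, the constraint is invariant under a common translation, so it still holds. Constraint (i) holds trivially because shifts are nonnegative. The remaining work is the mixed pairs. For a mixed pair, if the shifted object is the \emph{later} one in the original constraint, the constraint is preserved, since delaying the later object only loosens inequalities of the form $\text{finish}(x)\le\sigma(y)$.

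\textbf{Tasks whose successor stays.} Let $u\in S$ and let $w$ be a successor of $u$ with $w\notin S$. We must show $w\in S$ after all. If $u=v'$, then $w\in\text{succ}(v')$. Either $\start{w}\ge e$, which puts $w$ in the first part of $\mathbf{L}'_I$, or $\start{w}<e$, which puts it in the second part. Either way $w\in S$.

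If $u\in\mathbf{L}_I$, there are two subcases. If $\start{u}\ge e$, then $\start{w}>\start{u}\ge e$, so $w\in\mathbf{L}'_I$. If instead $u$ is a successor of $v'$, then $w$ is also a successor of $v'$ (reading $\text{succ}$ transitively), so $w\in\mathbf{L}'_I$.

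The delicate case is $u$ added to $\mathbf{L}_I$ only because it is a same-processor follower of some $x\in\mathbf{L}'_I$. Here $\start{u}\ge\finish{x}$, and I would argue that $\start{x}\ge b$, or that $x$ is a successor of $v'$ starting after $v'$. This would force $\start{w}>\start{u}\ge e$ in the relevant cases. I expect this to be the main obstacle. The set $\mathbf{L}_I$ is closed under only one round of processor-followers, not under full transitive closure. So I would first try to show that the definition, read as the text intends, yields a set closed under both successors and same-processor followers, and if needed interpret the construction as that closure. Under this closure, every task starting after a shifted task on its processor or in the DAG is itself shifted.

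\textbf{Processor overlaps.} For (ii), consider two tasks on the same processor with exactly one of them shifted. By the closure property, the unshifted task cannot start after the shifted one ends. Hence it precedes the shifted one, and shifting the later task right keeps them disjoint.

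\textbf{Communications.} For (iii) and (iv), a communication $v_{i,j}$ moves exactly when $v_i$ does, so the constraint $\finish{v_i}\le\sigma(v_{i,j})$ is invariant. For $\sigma(v_{i,j})+t(v_{i,j})\le\sigma(v_j)$, if $v_i\in S$ then $v_j\in S$ by the closure argument above, so both sides shift together.

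\textbf{Channel serialization.} Two messages on the same channel with only one shifted could overlap. Handling this requires either adding channel-followers to the closure, or arguing that a later message on the channel originates from a task that is itself in $S$. I would treat this like the processor case, regarding channels as processors in the closure.

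\textbf{Deadline.} Finally, the deadline is preserved in the deadline-aware variant, because $s_{v'}\le D-\finish{v_l}$.
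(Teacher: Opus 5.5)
Your case structure is the paper's: precedence on one processor, precedence through a message, and no overlap on processors and on channels. Everything moves later by one common amount, so only a pair whose earlier member moves while the later one stays can break. The paper also just asserts that the later member is moved ``by design of the local search''.

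\textbf{The gap.} Your proposal leaves exactly this step open, and neither of the exits you offer works. For a general window $[b,e)$ the one-round construction is genuinely not closed, so the argument you sketch for the delicate case is false. Take $I=[0,10)$ and put on $p_1$ the task $v'$ on $[0,1)$, a successor $x$ of $v'$ on $[1,2)$, and a task $u$ unrelated to $v'$ on $[2,3)$. Let $u$ send $c_{u,w}$ on $[3,4)$ to a successor $w$ on $p_2$ running on $[4,5)$. Then $\mathbf{L}'_I=\{x\}$, $\mathbf{L}_I=\{x,u\}$ and $s_{v'}=10$. After the shift $c_{u,w}$ ends at $14>\sigma(w)=4$, even though $x$ is a successor of $v'$ with $\sigma(x)\ge b$.

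The same problem affects your processor-overlap step. The definition never shifts a non-successor $y$ that follows $v'$ on $\mu(v')$ and starts before $e$, and the shifted $v'$ can overlap such a $y$. So that step uses a closure you do not have. Redefining $\mathbf{L}_I$ as a closure changes the move, and hence proves a different lemma.

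\textbf{The missing idea.} $I$ is a \emph{refined} interval: its breakpoints include every start and end time, so nothing starts or ends inside $(b,e)$. Hence every $v'\in\mathbf{C}_I$ satisfies $\sigma(v')\le b$ and $\sigma(v')+t(v')\ge e$. This has several consequences:
\begin{itemize}
\item all successors of $v'$, and all later tasks on $\mu(v')$, start at or after $e$;
\item the second set in $\mathbf{L}'_I$ is therefore empty;
\item every same-processor follower of a member of $\mathbf{L}'_I$ starts at or after $e$;
\item so $\mathbf{L}_I=\{v\in V\mid \sigma(v)\ge e\}$, and your delicate case does not occur.
\end{itemize}
Now every shifted object, including the outgoing messages, finishes at or after $e$. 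So for any constraint $\text{finish}(x)\le\sigma(y)$ with $x$ shifted, we get $\sigma(y)\ge e$. Then $y$ is shifted whenever it is a task or an outgoing message of a shifted task. This yields both precedence constraints and processor disjointness in one line.

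\textbf{Channels.} What remains is a message from an unshifted task queued on a channel behind a shifted message, for example behind a message of $v'$. This case is covered exactly when messages starting at or after $e$ count as members of $\{v\mid\sigma(v)\ge e\}$, i.e.\ when communications are treated as tasks on channels. That is the convention the paper invokes when it says the routine for links ``is the same''. Your channel paragraph relies on the same convention, so state it explicitly as part of how $\mathbf{L}_I$ is read rather than as an afterthought.
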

\begin{proof}
To prove the correctness of a schedule $\sigma$, we have to show the following properties:
\begin{enumerate}
    \item\label{proof.cond1} For each $v \in V$ and each predecessor $u$ of $v$ such that $\mu(u) = \mu(v)$ we have \begin{equation*}
        \finish{u} \leq \start{v}. 
    \end{equation*}
    \item\label{proof.cond2} For each $v \in V$ and each predecessor $u$ of $v$ such that $\mu(u) \neq \mu(v)$ we have \begin{equation*}
        \finish{u}\leq \start{c_{u,v}} \leq \finish{c_{u,v}} \leq \start{v}. 
    \end{equation*}
    \item\label{proof.cond3} There are no overlaps on any processor, i.\,e., there is no pair of tasks $u,v$ such that $\mu(u) = \mu(v)$ and \begin{equation}\label{eq.overlap_case1}
        \start{u} \leq \finish{v} \text{ and } \start{v} \leq \start{u}
    \end{equation} or \begin{equation}\label{eq.overlap_case2}
        \start{u} \leq \start{v} \leq \finish{v} \leq \finish{u}.
    \end{equation}
    \item\label{proof.cond4} Analogously to Property~\ref{proof.cond3}, there are no overlaps on any communication link. 
\end{enumerate}
Now assume $\sigma$, $\sigma'$ as stated in Lemma~\ref{lemma.local-search-correctness}.
We assume that the interval which the local search picked is $I=[b,e[$.
First, we show Property~\ref{proof.cond1}). Let $v\in V$. If $\text{deg}_{in}(v) = 0$, the task 
is independent and hence,~\ref{proof.cond1}) holds. 
Now assume $\text{deg}_{in}(v) \geq 1$ and let $u\in \text{N}_{in}(v)$ and $\mu(u)= \mu(v)$,
i.\,e., the tasks are located on the same processor. 
In $\sigma$, we have $\finish{u} \leq \start{v}$. 
Since tasks never get scheduled earlier but only later, we only have to consider the case when $u$ gets shifted. 
In this case, $u$ is a successor of the task $v'$ picked by the local search --
or it begins later than $e$. By design of the local search, $v$ is moved as well in both cases. 
Hence, we still have $\sigma'(u) + t(u) \leq \sigma'(v)$, and Property~\ref{proof.cond1}) still holds for $\sigma'$.

Now we look at Property~(\ref{proof.cond2}), i.\,e., we consider $u,v\in V$ with $\mu(u) \neq \mu(v)$. 
In this case, there is a message $c_{u,v}$ from $\mu(u)$ to $\mu(v)$. Since $\sigma$ is a valid schedule, 
we have \begin{equation}\label{eq.precedence}
    \finish{u}\leq \start{c_{u,v}} \leq \finish{c_{u,v}} \leq \start{v}. 
\end{equation}
First, note that the message $c_{u,v}$ cannot be moved if $u$ is not moved, by design of the local search.
Hence, with the same arguments as above, we only have to look at the case when $u$ is moved.
In this case, $c_{u,v}$ is moved as well, since all outgoing messages of the shifted messages are shifted as well.
Additionally, by the same arguments as above, because $v$ is a successor of $u$ and starts before $e$ or because $v$ starts later than $e$,
$v$ is moved as well by the same move as $u$ and $c_{u,v}$. Hence, Eq.~(\ref{eq.precedence}) becomes
\begin{align*}
    \sigma'(u) + t(u) + s_{v'} &\leq \sigma'(c_{u,v}) + s_{v'}\\
     &\leq \sigma'(c_{u,v}) + t(c_{u,v}) + s_{v'} \\ 
     &\leq \sigma'(v) + s_{v'}, 
\end{align*} 
and the schedule $\sigma'$ remains correct in this case.

Next, we show Property~\ref{proof.cond3}). Let $u,v\in V$ with $\mu(u) = \mu(v)$ be as in Eq.~(\ref{eq.overlap_case1}). 
First, since every tasks is moved by the same amount, we can assume that only one task is moved. 
Without loss of generality, we assume that in $\cal S$ we have $\finish{u} \leq \start{v}$, i.e., $u$ is processed before $v$. 
First, assume only $u$ is moved. In this case, by design of the local search, all later tasks on $\mu(u)$ are moved as well. 
Hence, this cannot be the case. 
Since in $\sigma$ tasks are only moved to later time units, and because $u$ is processed before $v$ in $\sigma$, we do not have to look at the case 
where only $v$ is moved. Hence, our assumption that there are tasks $u,v\in V$ that overlap on the same processor is a contradiction. 
Further, we see that the same arguments hold for $u,v\in V$ such that Eq.~(\ref{eq.overlap_case2}) holds. 
Hence, Property~(\ref{proof.cond3}) holds. Moreover, since the routines for adding later tasks on a processor 
or communication link is the same, it follows that Property~\ref{proof.cond4}) holds as well by the same arguments. 
Overall, $\sigma'$ is a valid schedule. 
\end{proof}
\clearpage
\subsection{Pseudocode}
\label{appendix.algorithm-details.further_pseudocode}
%
  \begin{algorithm}[h]
  \caption{Processor Subset Selection for Interval $I_j$}
  \label{alg:processor-subset}
  \begin{algorithmic}[1]
  \Require Green power budget $\mathcal{G}_j$, base power
  $\mathcal{P}_\text{base}$, processor speeds $s(\cdot)$, dynamic powers $\workP{}$,
  parameter $\tau$
  \Ensure Processor subset $\mathcal{P}_j$
  \State $C \gets \max\!\bigl\{0,\;\tau\,(\mathcal{G}_j -
  \mathcal{P}_\text{base})\bigr\}$
  \If{$C = 0$} 
    \State \Return $\{p_{\min}\}$ where $p_{\min} = \arg\min_p \workP{p}$ \Comment{tie-breaking by index}
  \EndIf
  \State Solve 0/1 knapsack by dynamic programming (DP), where \\
  items = processors, \\
  weight $= \workP{p}$, \\
  value $= s(p)$, \\
  capacity $= \mathit{C}$
  \State $\mathcal{P}_j \gets$ Reconstruct solution by backtracking through DP table
  \If{$\mathcal{P}_j = \emptyset$}
    \State \Return $\{p_{\min}\}$
  \EndIf 
  \State \Return $\mathcal{P}_j$
  \end{algorithmic}
  \end{algorithm}
  %
  %
  %
  %
  %
  \begin{algorithm}[h]
  \caption{Deadline-Agnostic Initial Mapping / Schedule}
  \label{alg:initial-mapping}
  \begin{algorithmic}[1]
  \Require DAG $G=(V,E,\omega,c)$, intervals $\{I_1, \dots, I_J\}$, tuning parameter $\tau$, processors and communication channels, green power budgets $(\budget_1, \dots, \budget_J)$, $\idleP{}$, $\workP{}$, processor speeds $s(\cdot)$, maximum number of local search iterations $\phi$
  \Ensure Preliminary carbon-aware schedule $\sigma$

  \For{interval $I_j \in \{I_1,\dots,I_J\}$}
    \State $\mathcal{P}_j \gets$ \Call{ProcessorSubset}{$\mathcal{G}_j$, $\mathcal{P}_\text{base}$, $s(\cdot)$, $\workP{}$, $\tau$} \Comment{Algorithm~\ref{alg:processor-subset}}
  \EndFor
  \State Compute $\mathrm{rank}(v) = \overline{\mathrm{rt}(v)} + \max_{(v,w)\in
   E}(c(v,w) + \mathrm{rank}(w))$ for all $v\in V$
  \State $\pi \gets$ Sort $V$ by $\mathrm{rank}(\cdot)$ descending (random tie-breaking)
  \For{task $v \in \pi$}
    \State $\mathrm{EST}'(v) \gets \max_{(u,v)\in E}\finish{u}$
    \State Find $I_j = [b_j, e_j)$ s.t.\ $b_j \le \mathrm{EST}'(v) < e_j$
    \State $(p_{best}, \sigma(v)') \gets$
      \Call{FindChoice}{$v,\,\mathcal{P}_j$}
      \Comment{pick $\arg\min$ EFT}
    \State $\mathit{retries} \gets 0$
    \While{$\start{v}' \notin I_j$ \textbf{and} $\mathit{retries} < 3$
  \textbf{and} $j{+}1 < J$}
      \State $j \gets j+1$;\quad $\mathit{retries} \gets \mathit{retries}+1$
      \State $(p_{best}, \sigma(v)') \gets$
        \Call{FindChoice}{$v,\,\mathcal{P}_j,\,\text{start not earlier than } b_j$}
    \EndWhile
    \State $\mu(v) \gets p_{best};\ \start{v} \gets \start{v}';\ \text{schedule all communications } c_{u,v} \text{ insertion based}$ 
  \EndFor
  \State Apply \Call{LocalSearch}{$\sigma$, $D{=}\infty$, $\phi$} \Comment{Algorithm~\ref{alg:local-search}}
  \State \Return $\sigma$
  \end{algorithmic}
  \end{algorithm}
  %
  %
  %
  %
  %
  \begin{algorithm}[h]
  \caption{Local Search}
  \label{alg:local-search}
  \begin{algorithmic}[1]
  \Require Schedule $\sigma$, deadline $D$, max iterations $\phi$, intervals $\{I_1,\dots,I_J\}$
  \Ensure Refined schedule $\sigma$

  \For{$i = 1$ \textbf{to} $\phi$}
    \State Compute refined intervals by sweep-line over task/communications and interval bounds
    \State Find leftmost refined interval $I=[b,e)$ with total power $>\mathcal{G}_I$;
    \If{no such interval exists}
      \State \Return
    \EndIf
    \State $\mathbf{C}_I \gets \{v \in V \mid \start{v} < e \text{ and }
  \finish{v} > b\}$
      \Comment{tasks running in $I$}
    \If{$\mathbf{C}_I = \emptyset$} 
      \State \Return
    \EndIf
    \State $v' \gets$ draw task uniformly at random from $\mathbf{C}_I$
    \State $\mathbf{L}'_I \gets \{v \in V \mid \start{v} \ge e\}
             \;\cup\; \{u \in \mathrm{succ}(v') \mid \start{u} < e\}$
    \For{$v \in \mathbf{L}_I'$}
      \State $\mathbf{L}_I \gets \mathbf{L}_I \cup \{u \in V \mid \mu(u) = \mu(v) \text{ and } \start{u} \geq \finish{v}\}$
    \EndFor
    \State $\mathbf{L}_I \gets \mathbf{L}'_I$
    \State $v_l \gets \arg\max_{u \in \{v'\}\cup\mathbf{L}_I} \finish{u}$
    \State $s_{v'} \gets \min\!\bigl(e - \start{v'},\; D - (\finish{v_l})\bigr)$
    \If{$s_{v'} \le 0$}
    \State \Return    
    \EndIf
    \State Shift all tasks in $\{v'\}\cup\mathbf{L}_I$ by $s_{v'}$
    \State Shift all outgoing communications of tasks in $\{v'\}\cup\mathbf{L}_I$ by $s_{v'}$
  \EndFor
  \State \Return $\sigma$
  \end{algorithmic}
  \end{algorithm}
  %
  %
  %
  %
  %
  \begin{algorithm}[h]
  \caption{Deadline Repair}
  \label{alg:deadline-repair}
  \begin{algorithmic}[1]
  \Require Schedule $\sigma$ (from Algorithm~\ref{alg:initial-mapping}),
  deadline $D$
  \Ensure Deadline-feasible schedule $\sigma'$
  \If{makespan$(\sigma) \le D$}
    \State Apply \Call{LocalSearch}{$\sigma$, $D$}
    \State \Return $\sigma$
  \EndIf
  \State $\pi \gets$ Re-compute upward ranks and sort $V$ by rank descending
  \State $\sigma' \gets $ \Call{Reschedule}{$\sigma$, $\xi{=}D$, $\pi$} \Comment{Algorithm~\ref{alg:rescheduling}}
  \If{makespan$(\sigma') \le D$}
    \State Apply \Call{LocalSearch}{$\sigma'$, $D$}
    \State \Return $\sigma'$
  \EndIf
  \State $\xi_\text{low} \gets 0$;\quad $\xi_\text{high} \gets D$
  \While{$\xi_\text{low} + 1 < \xi_\text{high}$}
    \State $\xi_\text{mid} \gets \xi_\text{low} +
  \lfloor(\xi_\text{high}-\xi_\text{low})/2\rfloor$
    \State $\sigma' \gets$ \Call{Reschedule}{$\sigma$, $\xi_\text{mid}$, $\pi$} \Comment{Algorithm~\ref{alg:rescheduling}}
    \If{makespan$(\sigma') \le D$}
      \State $\xi_\text{low} \gets \xi_\text{mid}$
    \Else
      \State $\xi_\text{high} \gets \xi_\text{mid}$
    \EndIf
  \EndWhile
  \State $\sigma' \gets$ \Call{Reschedule}{$\sigma$, $\xi_\text{low}$, $\pi$} \Comment{Algorithm~\ref{alg:rescheduling}}
  \State Apply \Call{LocalSearch}{$\sigma'$, $D$}
  \State \Return $\sigma'$
  \end{algorithmic}
  \end{algorithm}
  %
  %
  %
  %
  %
  \begin{algorithm}[h]
  \caption{Rescheduling Procedure for Deadline Repair}
  \label{alg:rescheduling}
  \begin{algorithmic}[1]
  \Require Schedule $\sigma$, deadline $D$, threshold $\xi$
  \Ensure Shifted schedule $\sigma'$
    \State $\mathcal{R}' \gets \{v \in V \mid \finish{v} > \xi\}$
    \State Via DFS:
           $\mathcal{R} \gets \mathcal{R}' \cup \{u \mid \exists\, v\in\mathcal{R}' \text{ with
  path } v\to u\}$
    \State $\sigma' \gets$ Initialize by keeping schedule of $V\setminus\mathcal{R}$
  from $\sigma$ fixed
    \For{$v \in \mathcal{R}$ in order $\pi$}
      \State Schedule $v$ using \heft \ on \emph{all} processors \Comment{no subset restriction}
    \EndFor
    \State \Return $\sigma'$
  \end{algorithmic}
  \end{algorithm}
\clearpage
\section{Setup Details}
\label{appendix.setup-details}
\noindent{\bf Workflows. }
%
In line with Section~\ref{sec.framework}, we represent the workflows as DAGs. 
To create them, we use the same tools and techniques as in~\cite{CaWoSched}:
we first take real-world workflows (atacseq, bacass, methylseq (only a rescaled version is used), eager, chipseq) 
from~\cite{lotaru}, transform them to our model, and scale them with the WFGen generator~\cite{wfcommons} 
to the desired size.  The resulting 44 workflows have between $12$ and $\numprint{30000}$ vertices.  
The DAG weights are sampled 
from a normal distribution, where mean and standard deviation are derived from the speed values of the cluster. 

\smallskip
\noindent{\bf Profiles. }
%
For the power budget profiles, we use a more realistic approach than~\cite{CaWoSched}.
We sample real carbon intensities from an Electricity Maps dataset~\cite{electricitymaps}, using the hourly data for Germany and for California in 2024. 
A given time horizon is first partitioned into intervals by computing random split points.
Thereby, we ensure that each interval length is uniform within a configurable integer range (we chose $[10, 50]$). 
For the intervals, we pick a contiguous subsequence of carbon intensity values uniformly at random. 
Depending on the cluster type, we then rescale the values as follows to invert the carbon
intensities, making the smallest carbon intensity correspond to the largest green power budget:
For the sequence of values, we denote by $[x_{\min}, x_{\max}]$ the range of carbon intensities. 
Then we map this range to a target power interval $[P_{\min}, P_{\max}]$, where we choose 
$P_{\min}$ to be the sum of all idle powers in the cluster and for $P_{\max}$ we add 
a fraction ($0.2$ for the small cluster, $0.4$ for the large cluster) of the sum of all dynamic power values
for the compute nodes and the communication links.
Then, any carbon intensity $x$ becomes 
\begin{equation}
\label{eq:carbon-intensity-power-budget}
    x^{\prime} = P_{\max} - \frac{x-x_{\min}}{x_{\max} - x_{\min}}(P_{\max} - P_{\min}).
\end{equation}
\clearpage
\section{Additional Experimental Results}
\label{appendix.further-results}
In this section, we show additional plots for the experimental evaluation of the algorithms 
performed in Section~\ref{sec.experiments}.
\begin{figure}[h]
\centering 
\begin{subfigure}{0.48\textwidth}
  \centerline{\includegraphics[width=\linewidth]{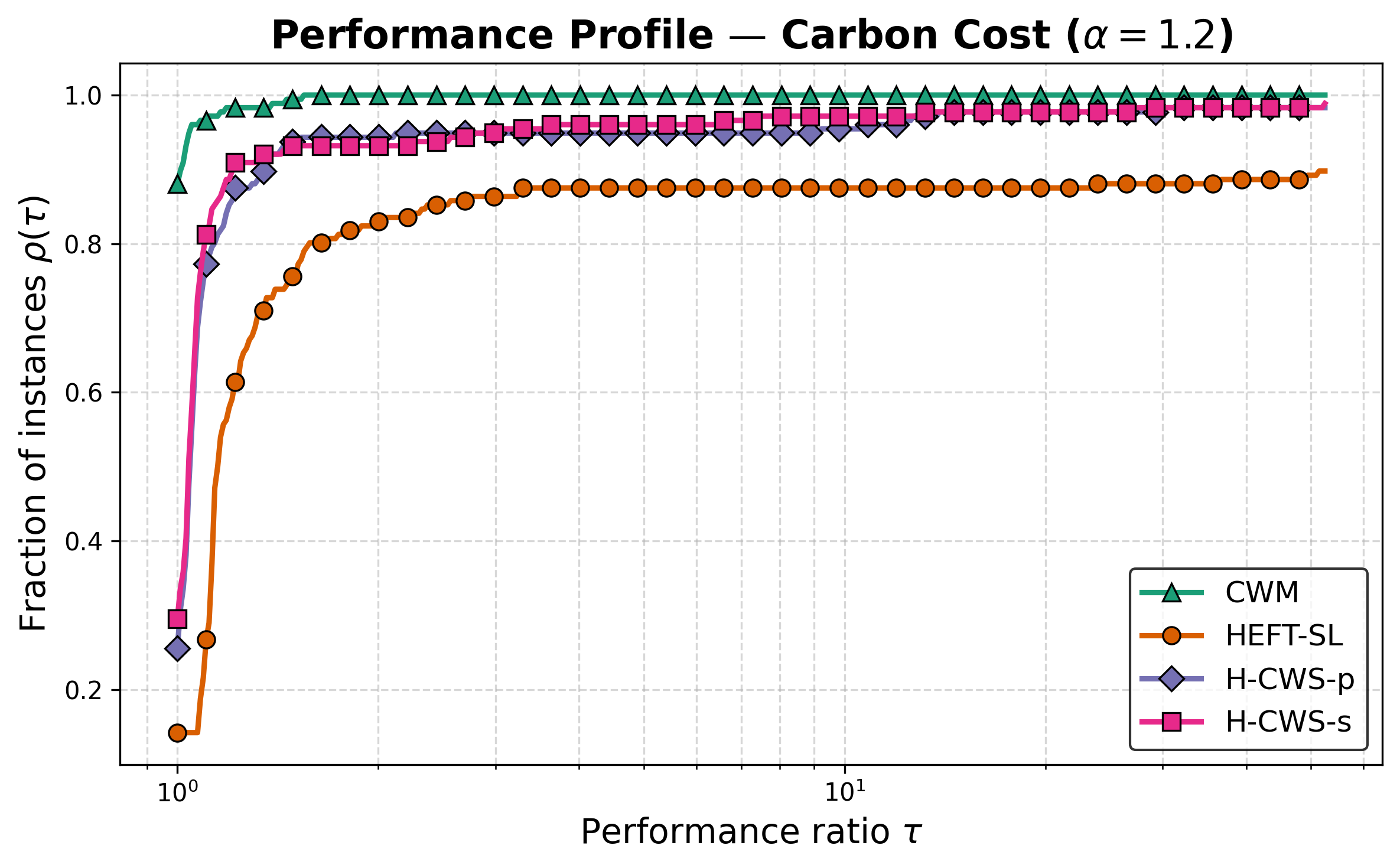}}
\end{subfigure}
\hfill
\begin{subfigure}{0.48\textwidth}
  \centerline{\includegraphics[width=\linewidth]{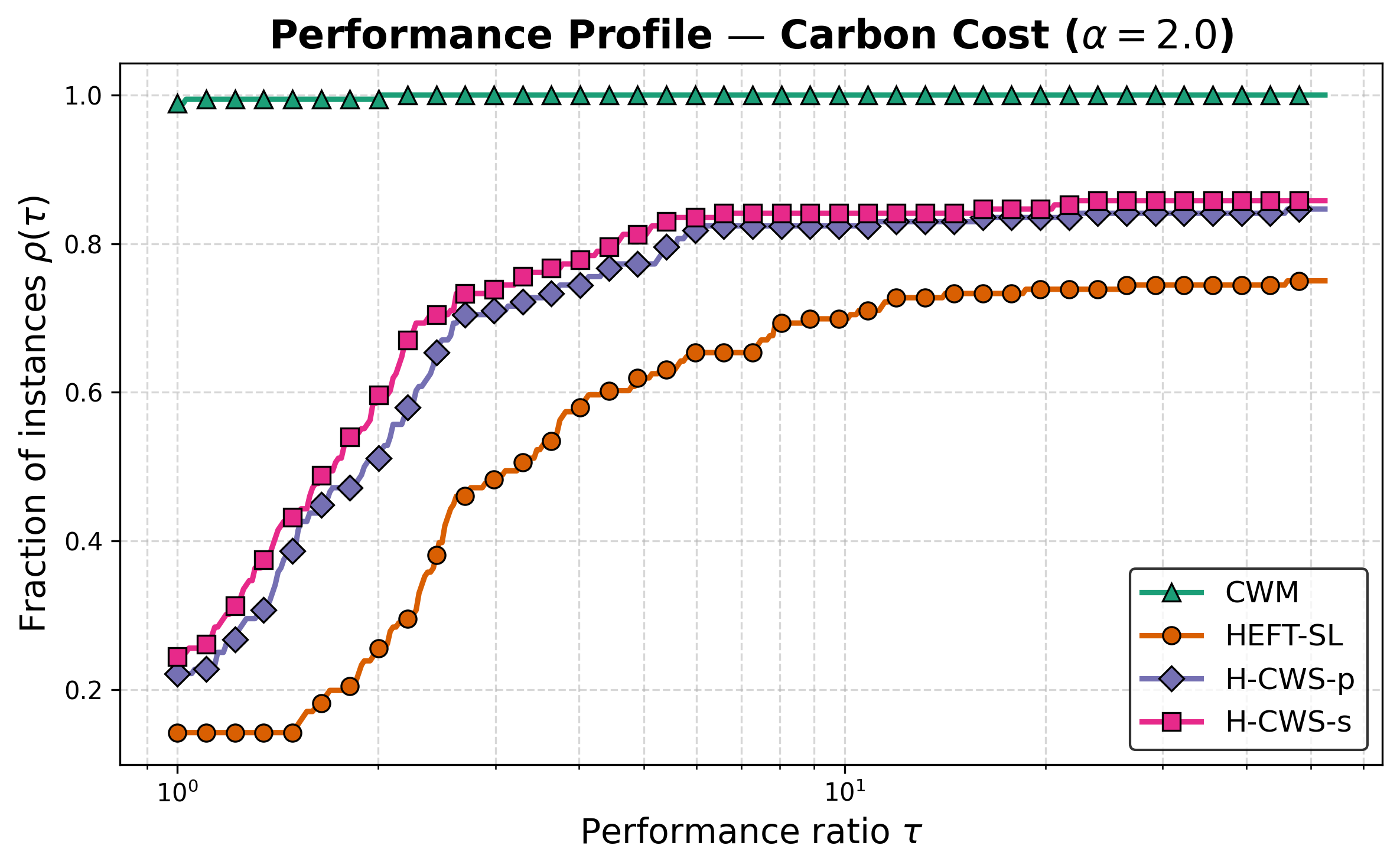}}
\end{subfigure}
\caption{Performance profiles for deadlines 
$D=1.2\times M$ (left) and $D=2.0\times M$ (right).}
\label{fig.performance-profile_12_20}
\end{figure}
\begin{figure}[h]
  \centerline{\includegraphics[width=0.6\textwidth]{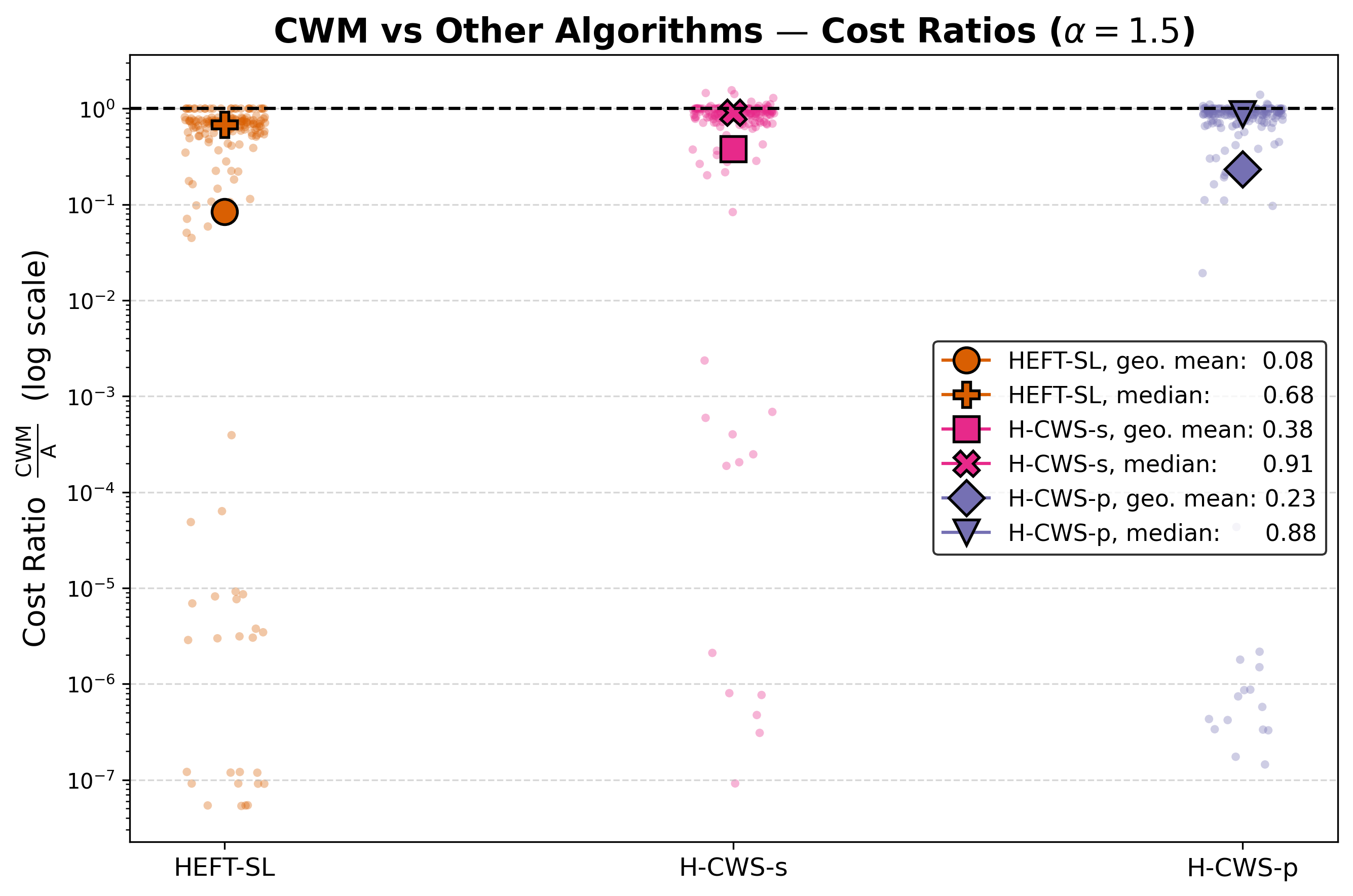}}
  \caption{Carbon cost ratios as described in Section~\ref{par.cost-analysis} for 
  deadline $D=1.5\times M$.}
  \label{fig.cost_ratios_15}
\end{figure}
\begin{figure}[h]
\centering 
\begin{subfigure}{0.48\textwidth}
  \centerline{\includegraphics[width=\linewidth]{abs_execution_time_12.png}}
\end{subfigure}
\hfill
\begin{subfigure}{0.48\textwidth}
  \centerline{\includegraphics[width=\linewidth]{abs_execution_time_20.png}}
\end{subfigure}
\caption{Absolute execution time (seconds) for each algorithm for deadlines 
$D=1.2\times M$ (left) and $D=2.0\times M$ (right).}
\label{fig.abs-time_12_20}
\end{figure}
\begin{figure}[h]
\centering
\begin{subfigure}{0.48\textwidth}
  \centerline{\includegraphics[width=\linewidth]{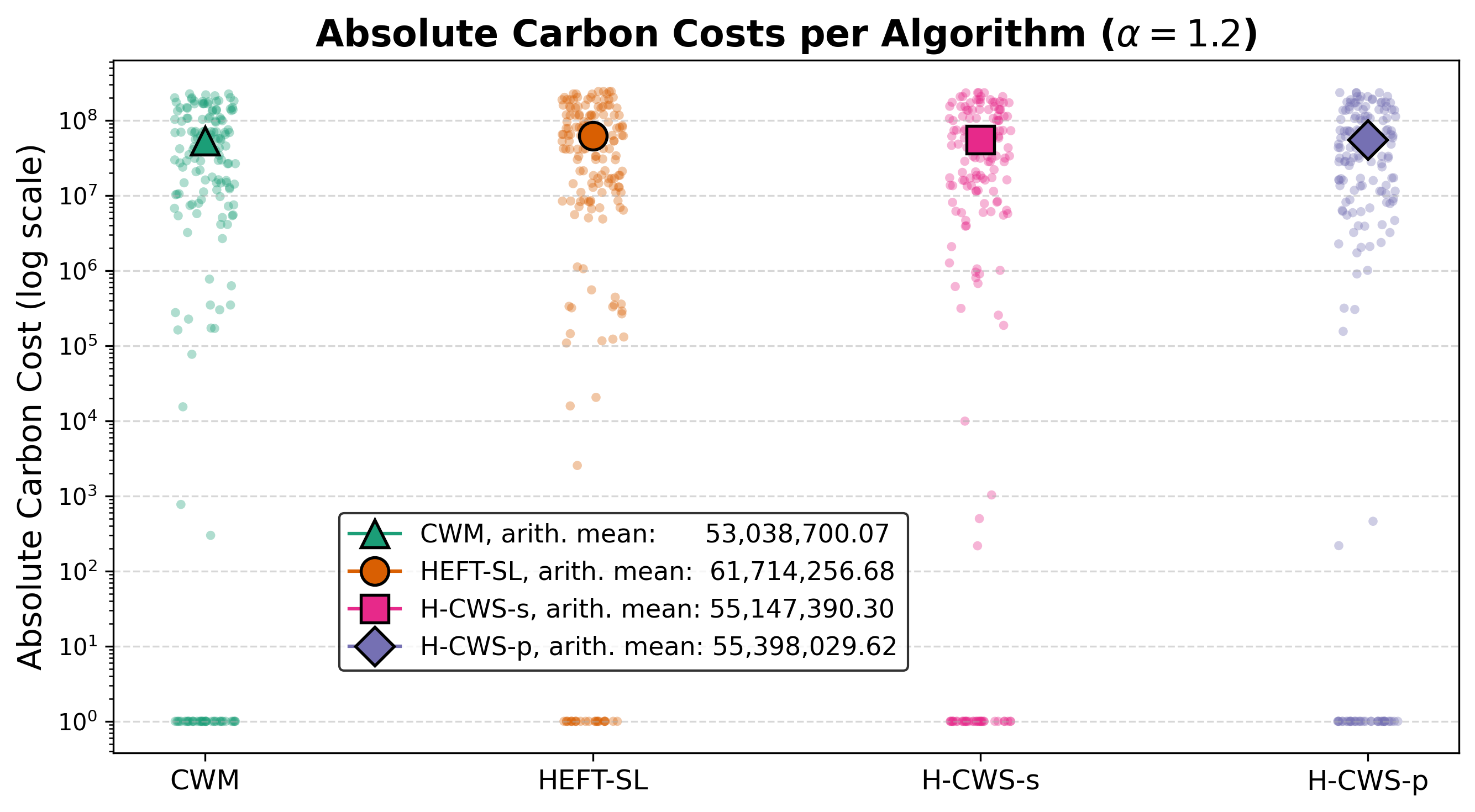}}
\end{subfigure}
\hfill
\begin{subfigure}{0.48\textwidth}
  \centerline{\includegraphics[width=\linewidth]{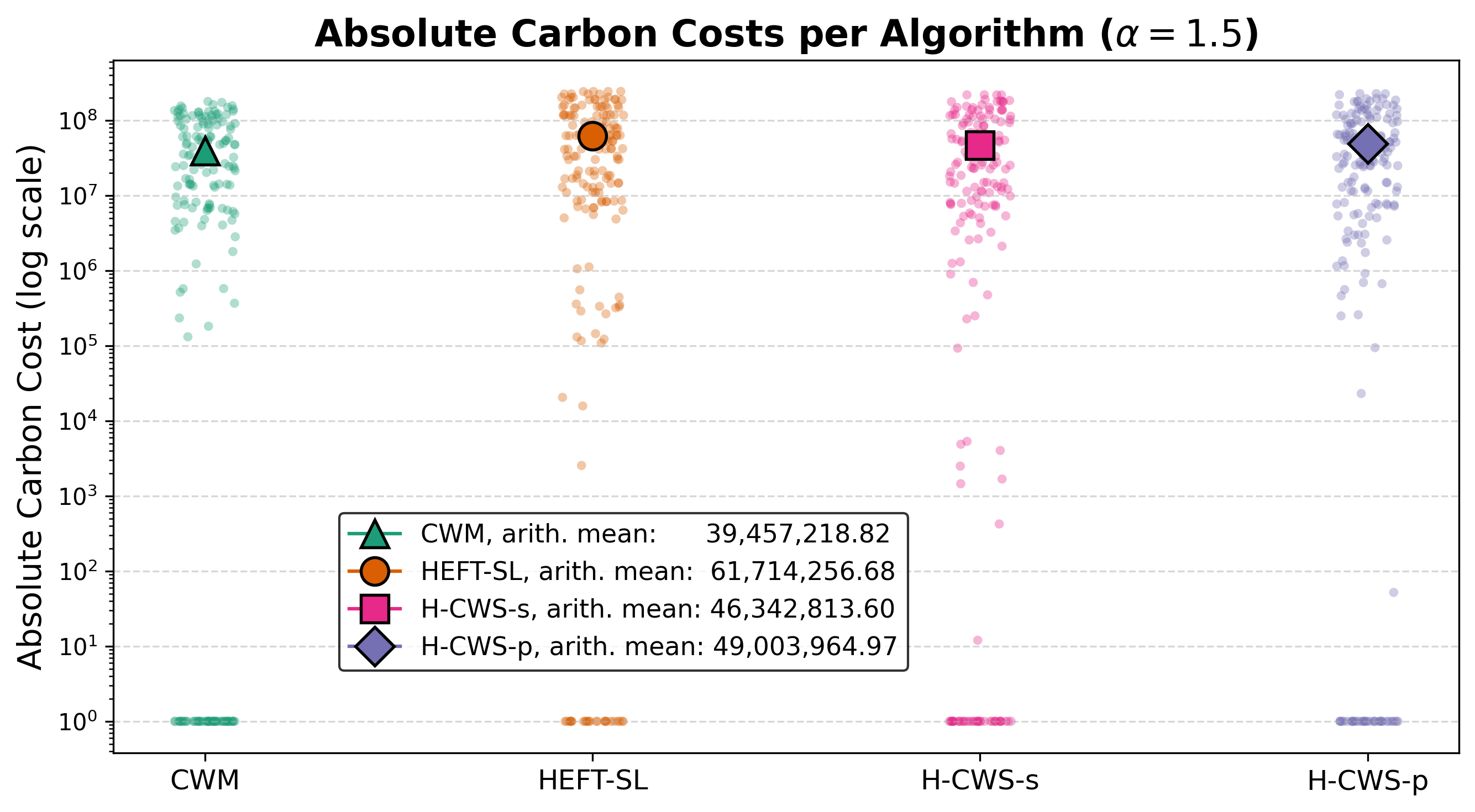}}
\end{subfigure}
\begin{subfigure}{0.48\textwidth}
  \centerline{\includegraphics[width=\linewidth]{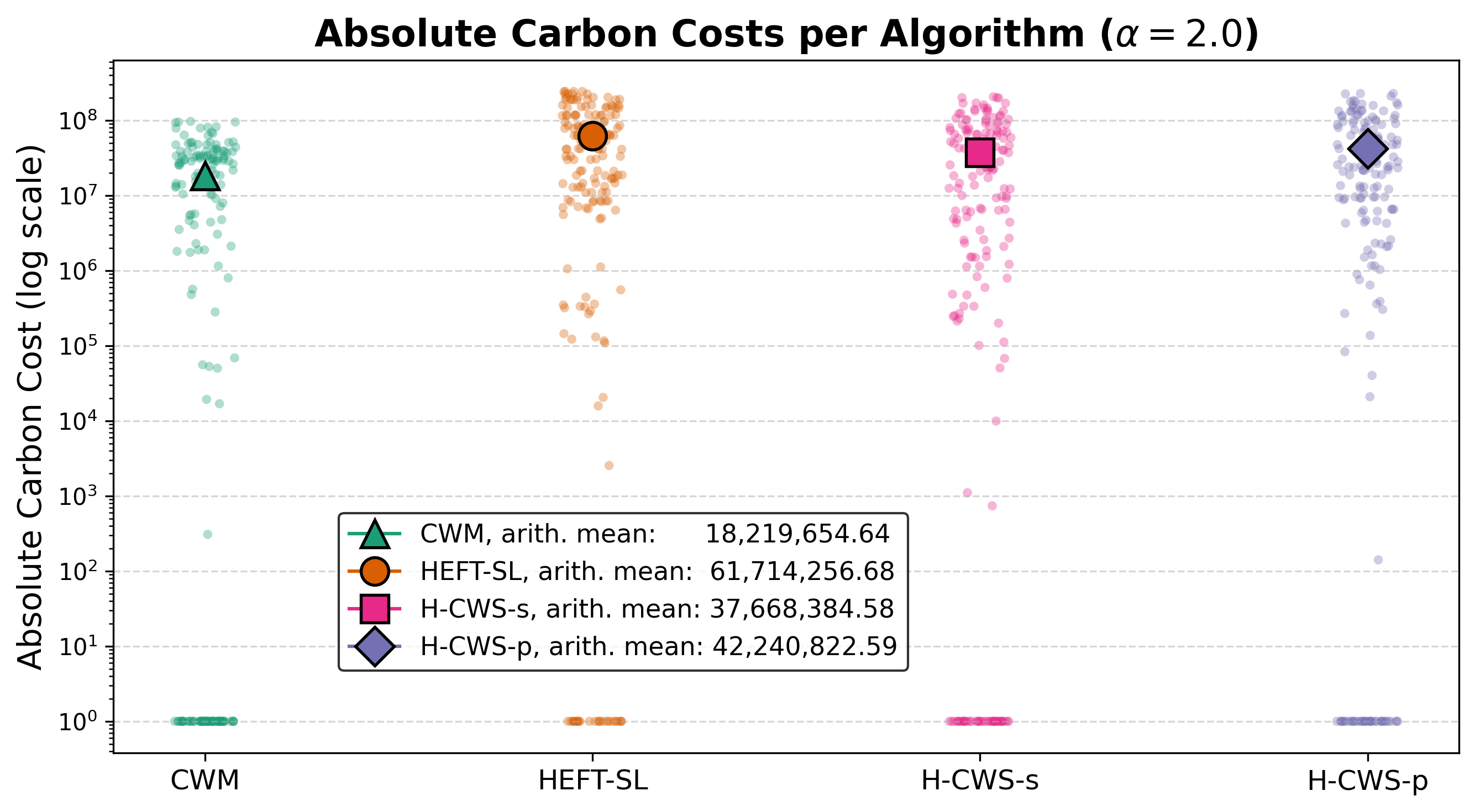}}
\end{subfigure}
\caption{Absolute carbon cost for each algorithm for deadlines $D=1.2\times M$ (top left), $D=1.5\times M$ (top right),
and $D=2.0\times M$ (bottom).}
\label{fig.abs-cost_12_15_20}
\end{figure}
\begin{figure}[h]
\centering
\begin{subfigure}{0.48\textwidth}
  \centerline{\includegraphics[width=\linewidth]{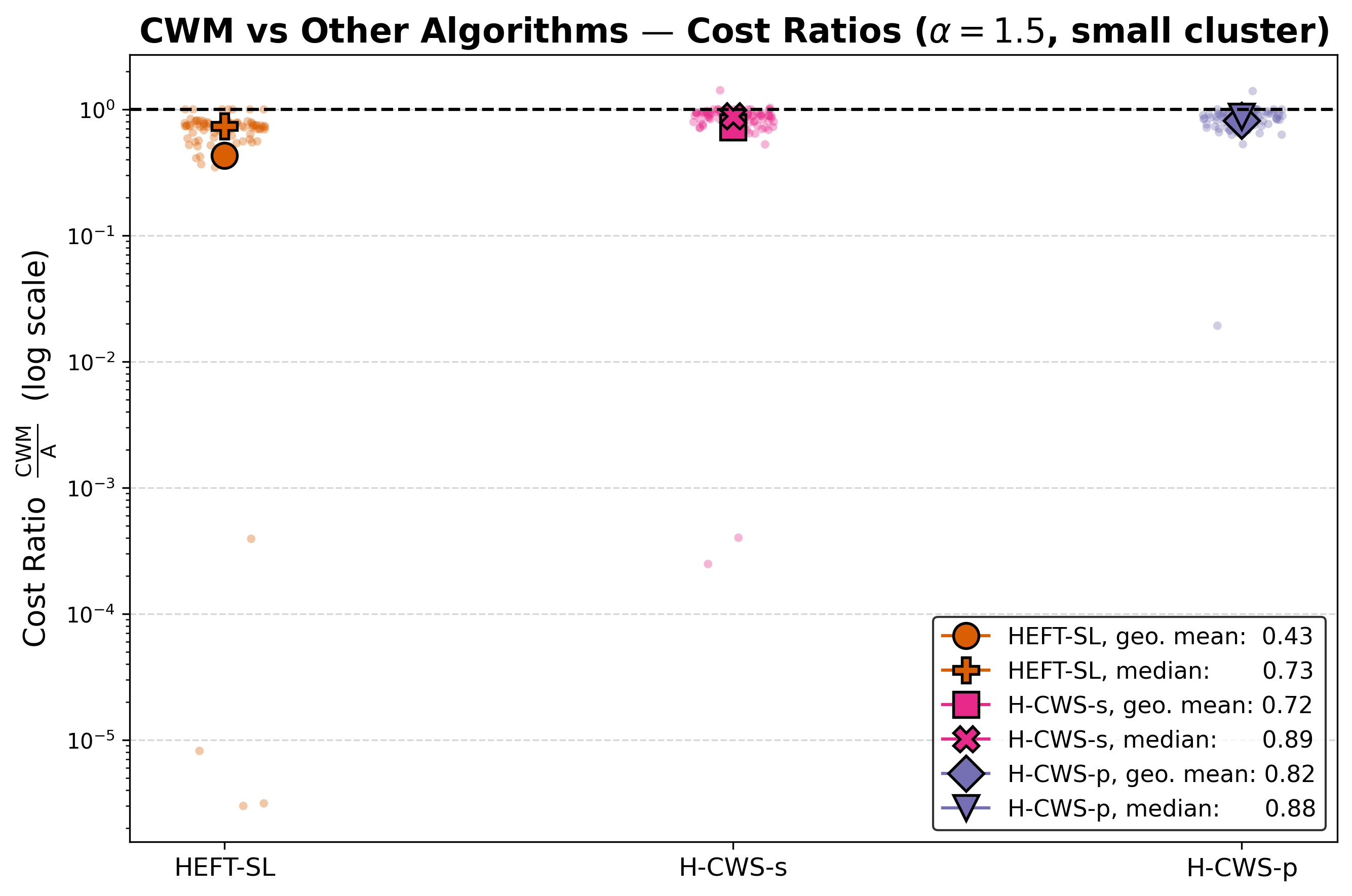}}
\end{subfigure}
\hfill
\begin{subfigure}{0.48\textwidth}
  \centerline{\includegraphics[width=\linewidth]{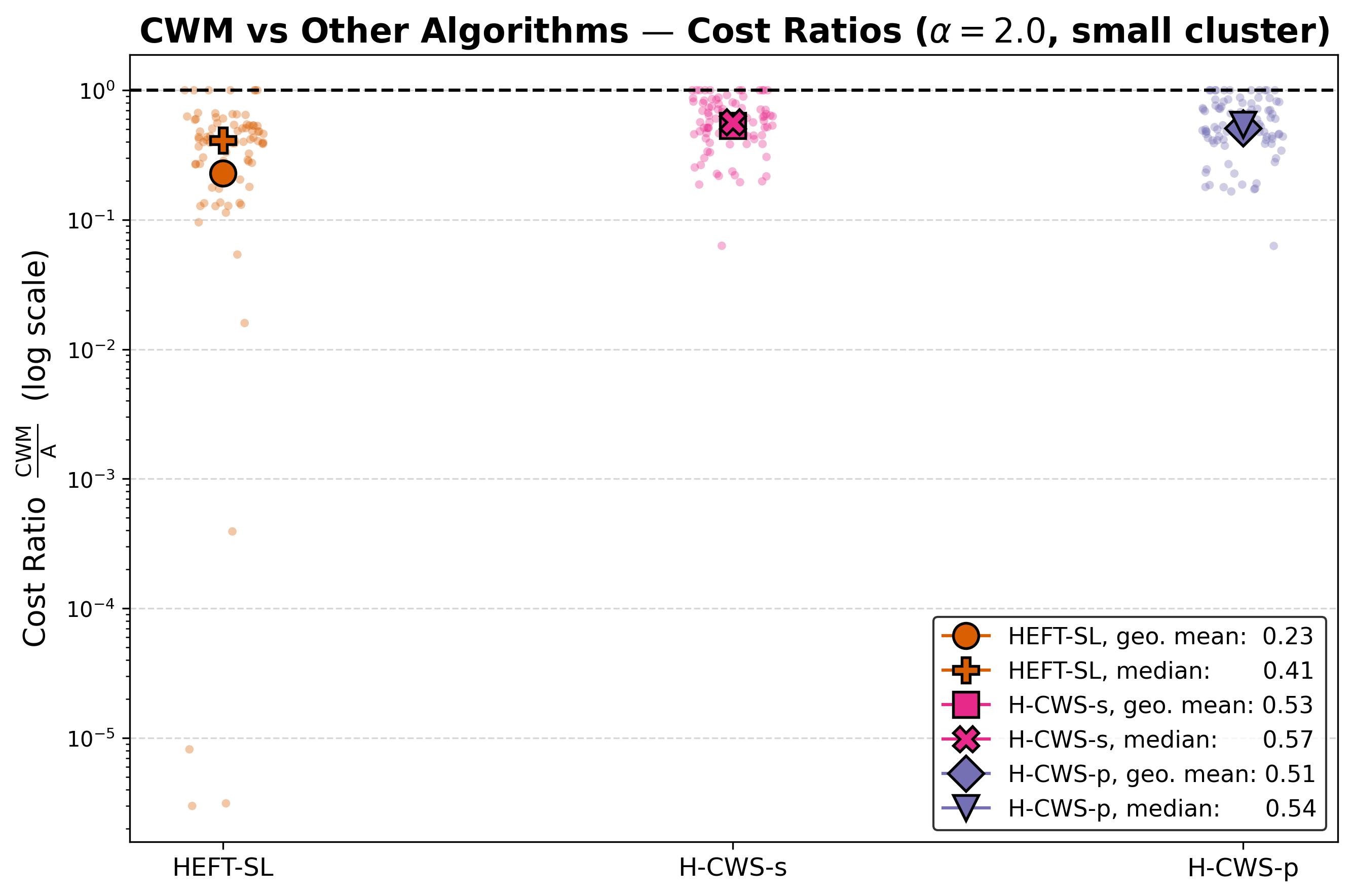}}
\end{subfigure}
\begin{subfigure}{0.48\textwidth}
  \centerline{\includegraphics[width=\linewidth]{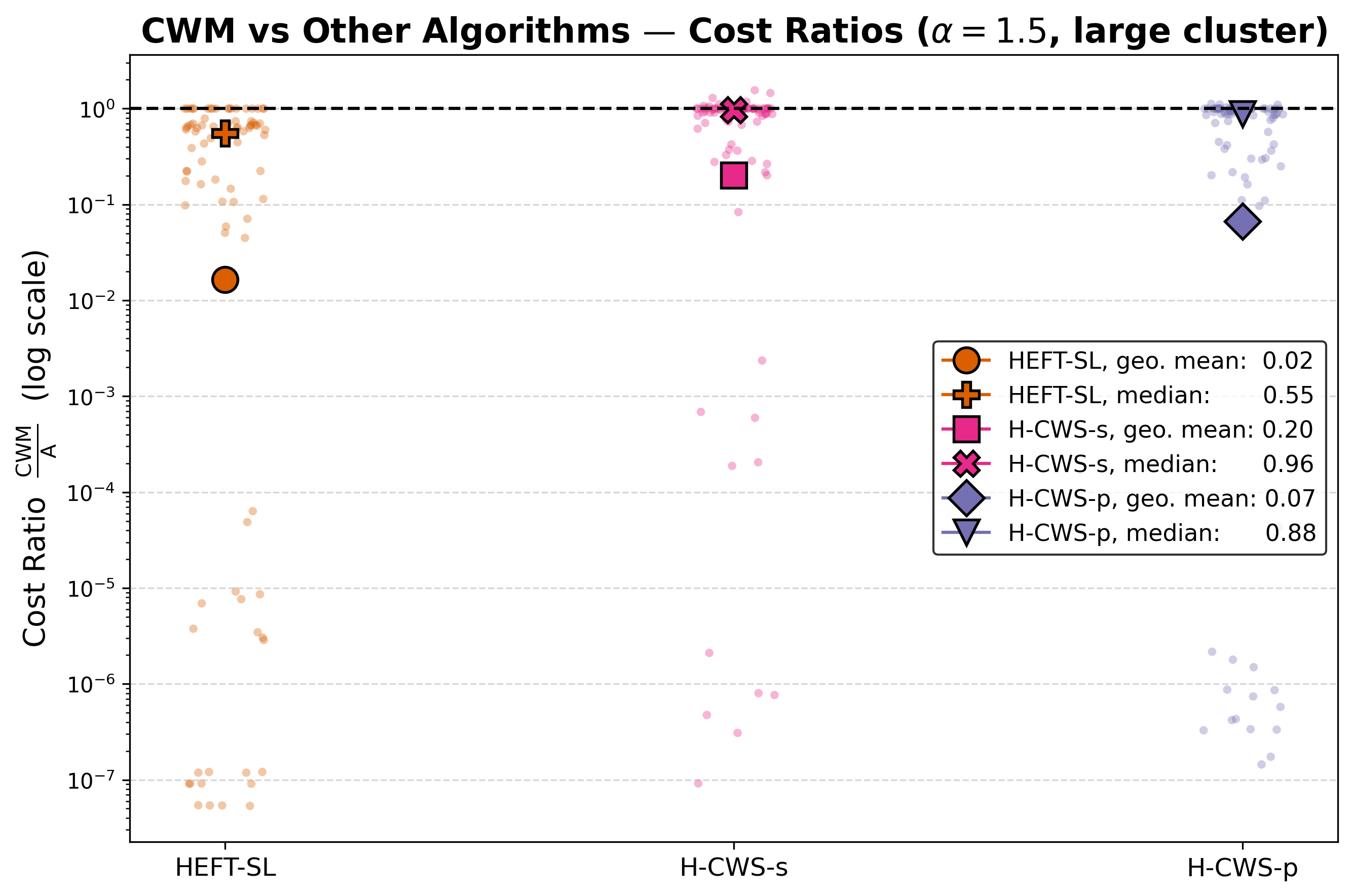}}
\end{subfigure}
\hfill
\begin{subfigure}{0.48\textwidth}
  \centerline{\includegraphics[width=\linewidth]{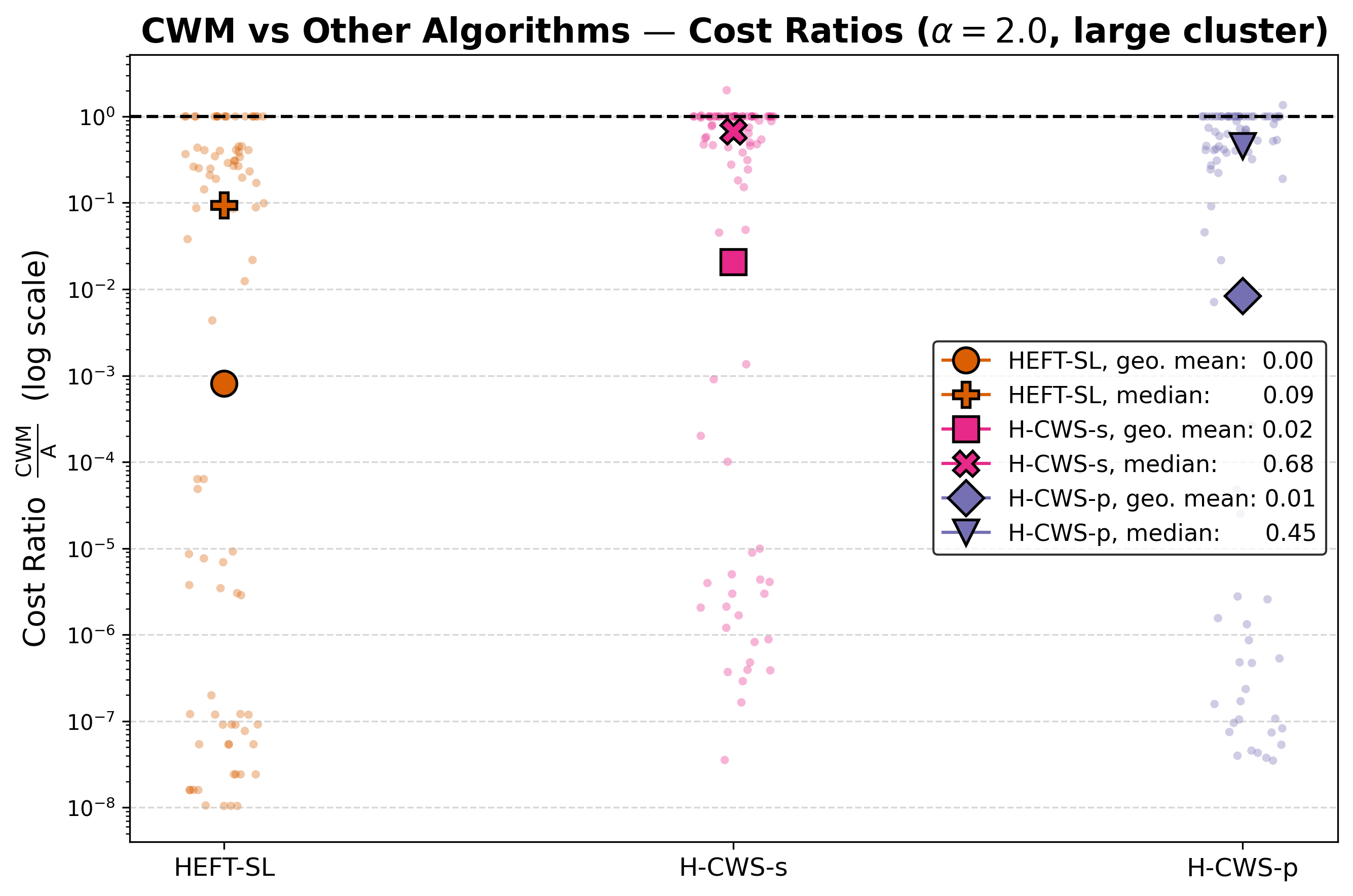}}
\end{subfigure}
\caption{Carbon cost ratios for each competitor for the small and large cluster and for the deadlines
$D=1.5\times M$ and $D=2.0\times M$. Small cluster, $D=1.5\times M$: top left; Small cluster, $D=2.0\times M$: top right;
Large cluster, $D=1.5\times M$: bottom left; Large cluster, $D=2.0\times M$: bottom right.}
\label{fig.cost-ratios_cluster-size}
\end{figure}
\begin{figure}[h]
  \begin{subfigure}{0.48\textwidth}
    \centerline{\includegraphics[width=\linewidth]{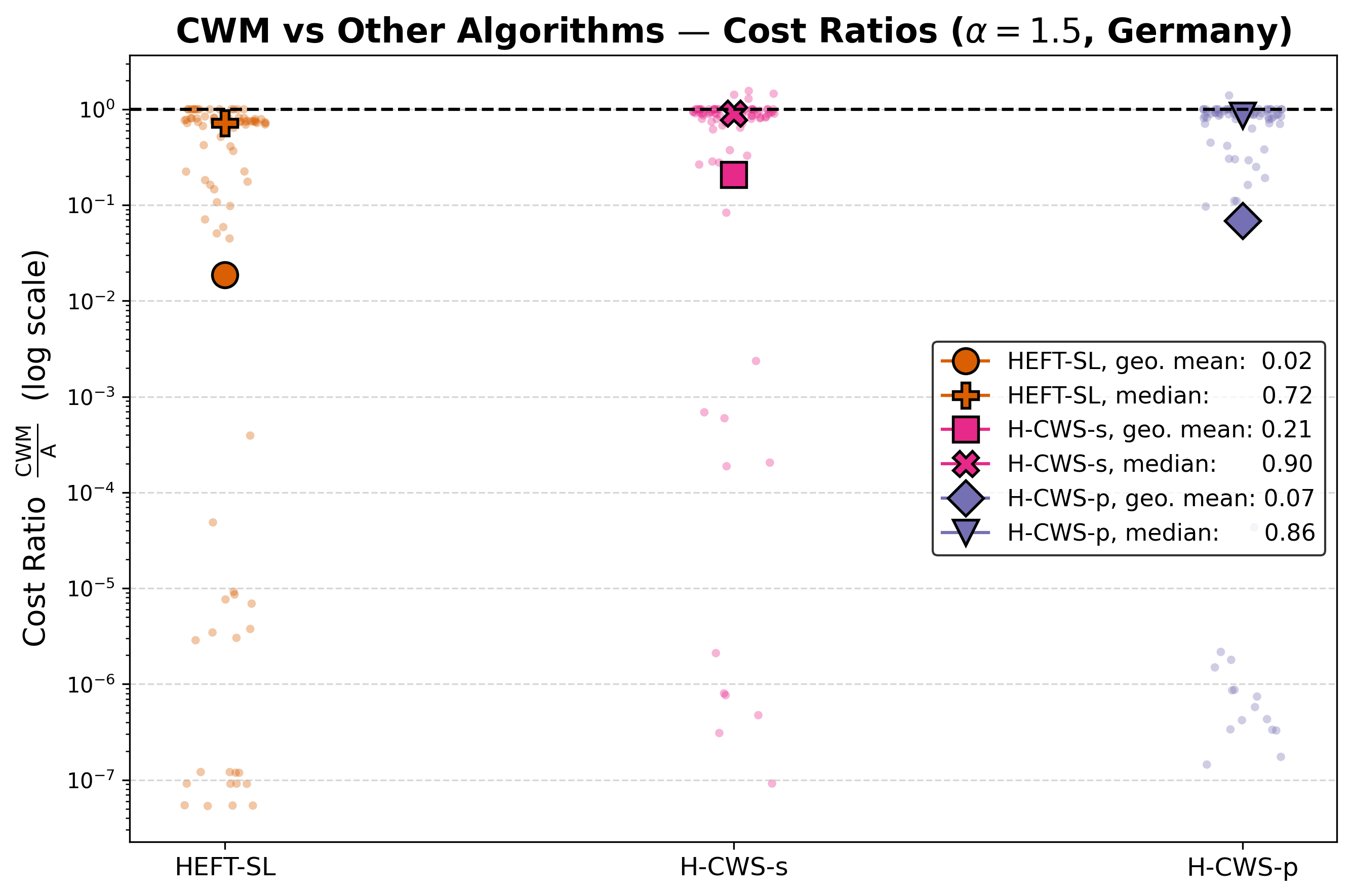}}
  \end{subfigure}
  \hfill
  \begin{subfigure}{0.48\textwidth}
    \centerline{\includegraphics[width=\linewidth]{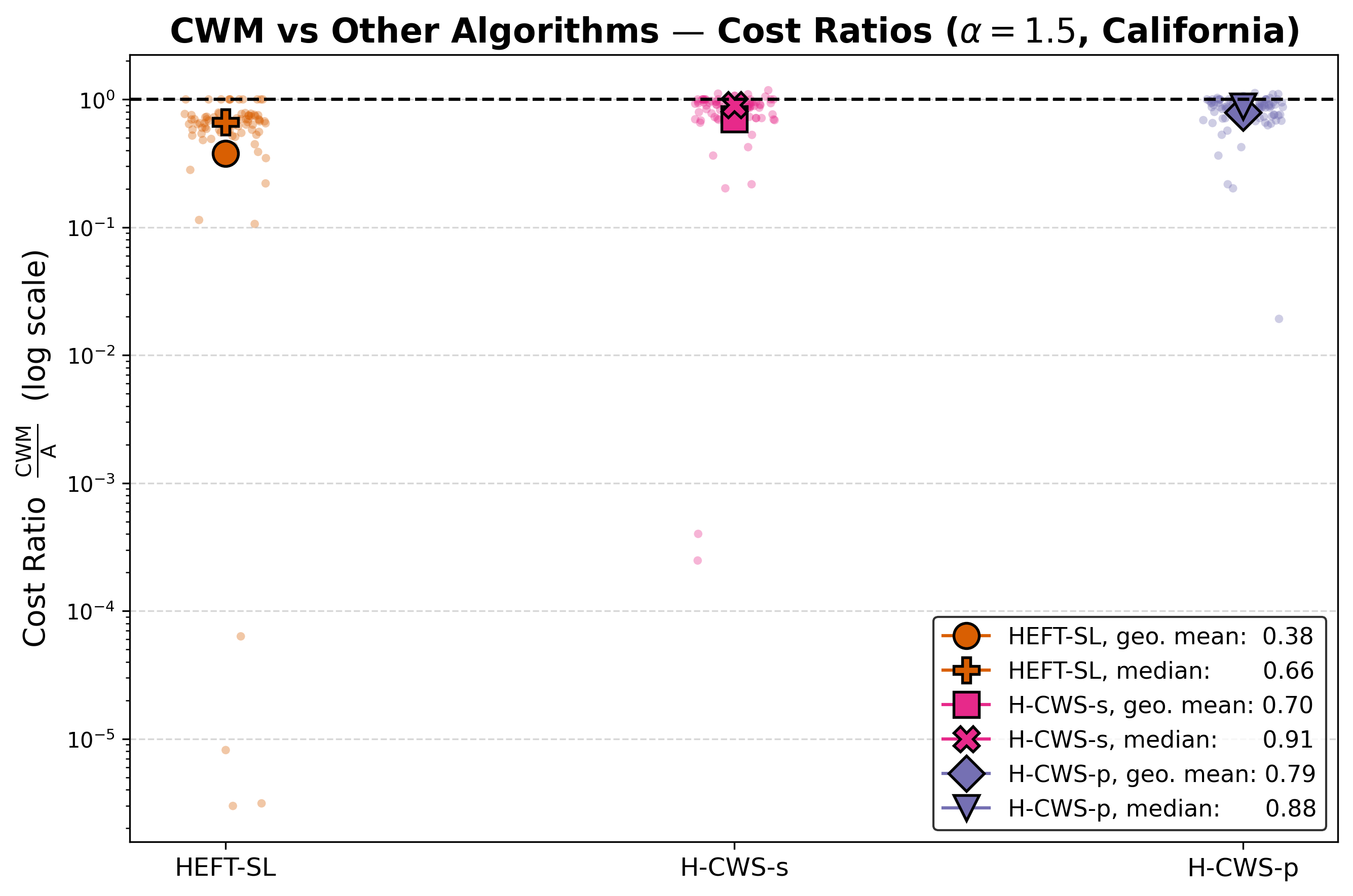}}
  \end{subfigure}
  \caption{Carbon cost ratios for each competitor for the Germany profile (left) and the California profile (right) and the deadline $D=1.5\times M$.}
  \label{fig.cost-ratios_cal_ger}
\end{figure}
\begin{figure}[h]
\centering
\begin{subfigure}{0.48\textwidth}
  \centerline{\includegraphics[width=\linewidth]{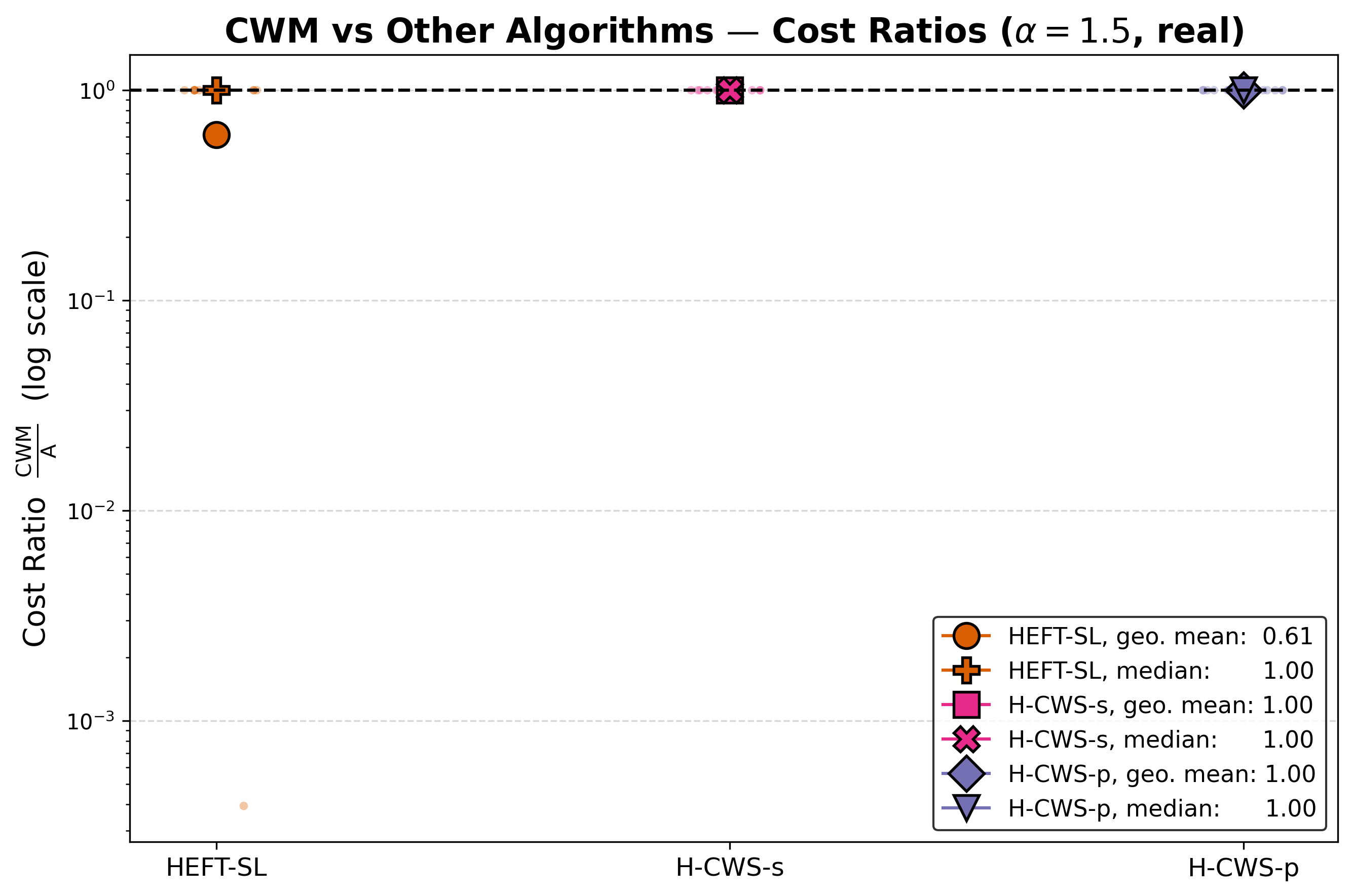}}
\end{subfigure}
\hfill
\begin{subfigure}{0.48\textwidth}
  \centerline{\includegraphics[width=\linewidth]{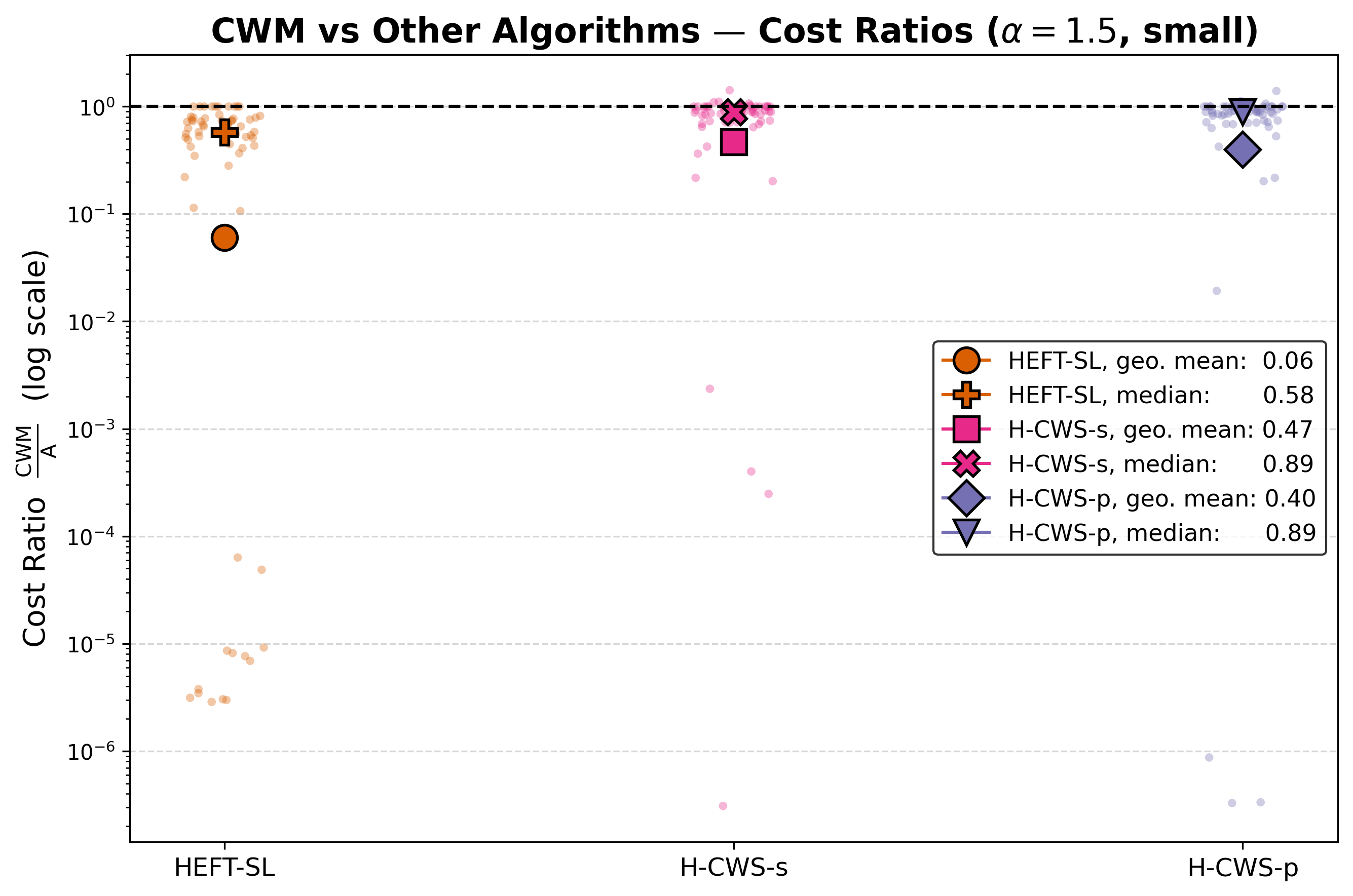}}    
\end{subfigure}
\begin{subfigure}{0.48\textwidth}
  \centerline{\includegraphics[width=\linewidth]{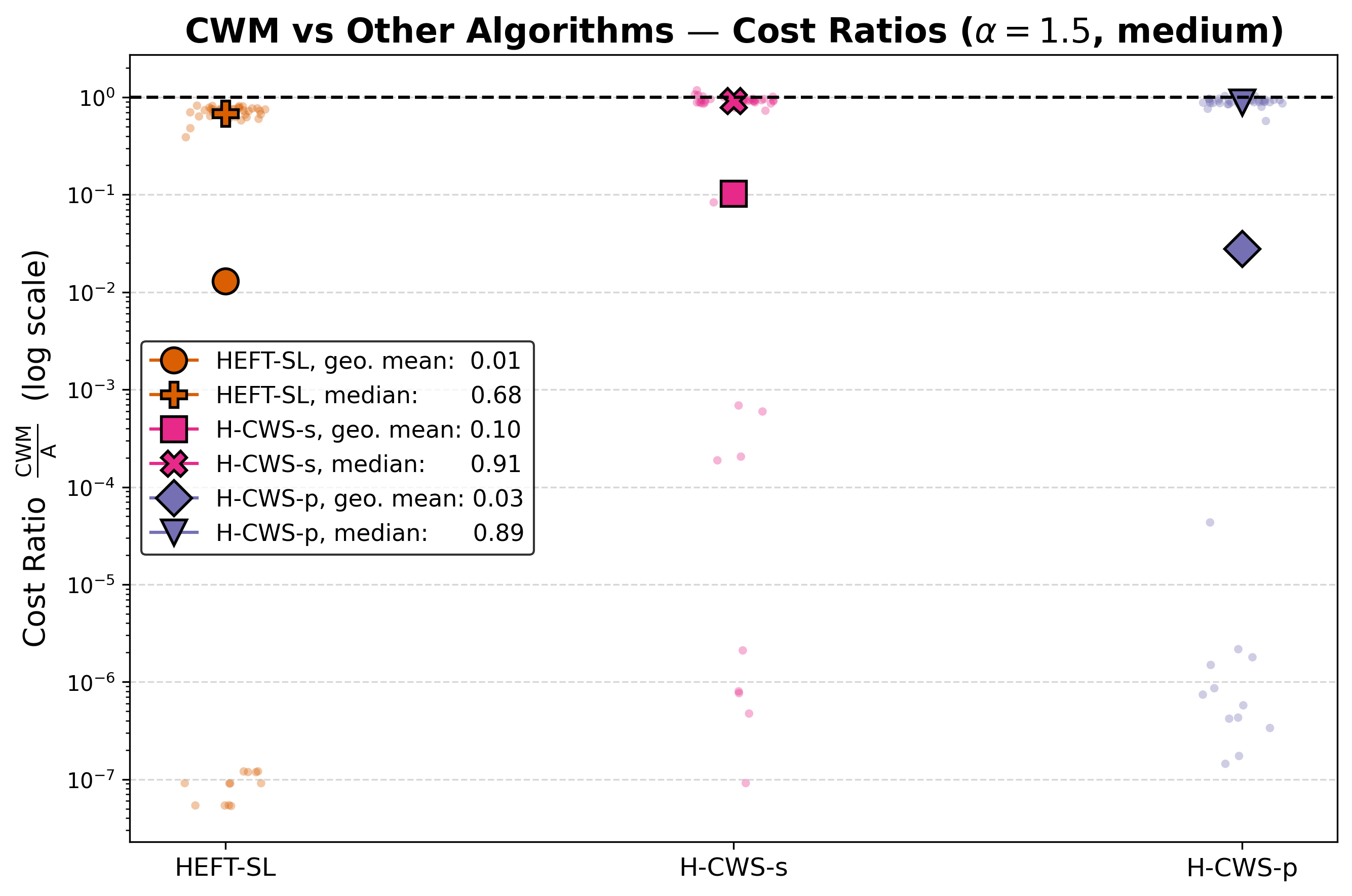}}    
\end{subfigure}
\hfill
\begin{subfigure}{0.48\textwidth}
  \centerline{\includegraphics[width=\linewidth]{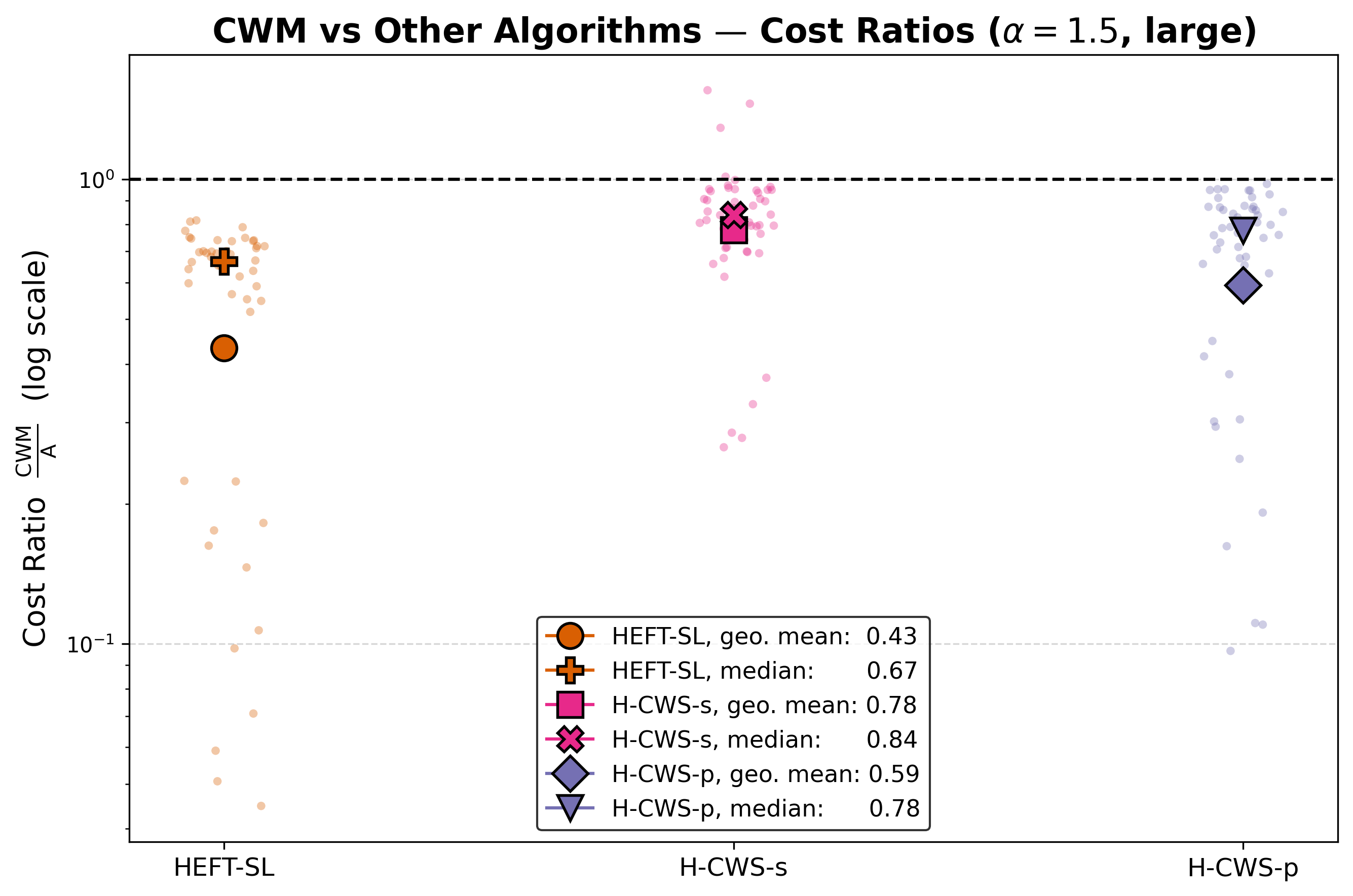}}    
\end{subfigure}
\caption{Carbon cost ratios for each competitor for tiny real-world ($\leq 60$ tasks, top left), small ($\approx 200 - 4000$ tasks, top right), medium ($\approx 8000 -15000$ tasks, bottom left) and large ($\approx 20000-30000$ tasks, bottom right) workflows 
and deadline $D=1.5\times M$.}
\label{fig.cost-ratios_size}
\end{figure}
}{}
\end{document}